\newcounter{NN}
\newtheorem{proposition}[NN]{Proposition}
\newtheorem{theorem}[NN]{Theorem}
\newtheorem{lemma}[NN]{Lemma}
\newtheorem{example}[NN]{Example}
\def\b#1{{\bf #1}}
\def\x{{\bf x}}
\def\r{{\bf r}}
\def\k{{\bf k}}
\def\u{{\bf u}}
\def\A{{\bf A}}
\def\sp{{\rm Sp}}
\def\im{{\rm Im}}
\def\ker{{\rm Ker}}
\def\R{\mathbb{R}}
\def\e{{\rm e}}
\begin{document}

\title{On a quadratic Poisson algebra and integrable Lotka-Volterra systems with solutions in terms of Lambert's W function}
\author{
Peter H.~van der Kamp, D.I. McLaren and G.R.W. Quispel
\\[2mm]
Department of Mathematics and Statistics, La Trobe University, Victoria 3086, Australia.\\
Email: P.vanderKamp@LaTrobe.edu.au\\[7mm]
}

\maketitle

\begin{abstract}
We study a class of integrable inhomogeneous Lotka-Volterra systems whose quadratic terms are defined by an antisymmetric matrix and whose linear terms consist of three blocks. We provide the Poisson algebra of their Darboux polynomials, and prove a contraction theorem. We then use these results to classify the systems according to the number of functionally independent (and, for some, commuting) integrals. We also establish separability/solvability by quadratures, given the solutions to the 2- and 3-dimensional systems, which we provide in terms of the Lambert W function.  
\end{abstract}

\section{Introduction}
The quest for integrating ordinary differential systems with quadratic terms has a long history. In 1879, P. Hoyer studied the system
\begin{equation} \label{Hooi}
\begin{split}
\dot{x}_1 &= a_1x_2x_3 + a_2x_3x_1 + a_3x_1x_2 \\
\dot{x}_2 &= b_1x_2x_3 + b_2x_3x_1 + b_3x_1x_2 \\
\dot{x}_3 &= c_1x_2x_3 + c_2x_3x_1 + c_3x_1x_2,
\end{split}
\end{equation}
through Puiseux series expansion and found exact solutions in terms of elliptic functions \cite{Hoyer}. Kovalevskaya also studied the integrability of quadratic systems of ODEs before her seminal work on the Euler equations. In a letter to G. Mittag-Leffler \cite{Kov}, she discussed the class of  three-dimensional Lotka-Volterra (LV) systems
\begin{equation} \label{Koof}
\dot{x}_i=x_i\sum_{j=1}^3a_{ij}x_j,\quad i=1,2,3,
\end{equation}
mentioning that under the condition $a_{12}a_{23}a_{31}=a_{21}a_{32}a_{13}$ the general solution depends on 3 parameters and has only first order pole singularities in (finite) complex time, a property which is nowadays attributed to Painlev\'e, cf. \cite{Gor,LT,PS}. An $n$-component generalisation of the Kovalevskaya system (the case $n=3$, $k=2$, equation \eqref{Koof}),
\[
\dot{x}_i=x_i\left(-kx_i+\sum_{j=1}^nx_j\right),\quad i=1,\ldots,n,
\]
its integrability, and discretisation, was recently studied in \cite{BK,PS}. Generalising in another direction, a complete list of integrable inhomogeneous 3-component Lotka-Volterra systems was obtained in \cite{GMRSW}, using the Frobenius integrability theorem, Painlev\'e analysis, and the Jacobi last multiplier method. General solutions of many of these systems, including certain
ABC and May–Leonard systems, were constructed with the aid of special functions, in \cite{Mai}.

Another (not unrelated) approach to integrability is via the theory of Darboux polynomials. This is the approach taken in e.g. \cite{Cai,ChD} and the approach we adopt here. A polynomial $P(x)$ is a Darboux polynomial (DP) if $\dot{P}(x)=P(x)C(x)$, where $C(x)$ is called the cofactor of $P(x)$. A DP with cofactor $0$ is an integral, and the numerator and denominator of a rational integral are DPs with the same cofactor. DPs were already studied by Darboux, Poincaré, Painlevé and others, and are known by several other names, including ‘second integrals’ and ‘weak integrals’, cf. \cite{GorB} for a nice introduction.

In \cite{KKQTV}, the Lotka-Volterra system
\begin{equation} \label{OLV}
\dot{x}_i=x_i\,\big(\sum_{j>i} x_j - \sum_{j<i} x_j\big), \qquad i=1,\ldots,n,
\end{equation}
arose as a subsystem of quadratic vector fields associated with multi-sums of products, and it was shown (for all $n$) to be superintegrable (with $n-1$ integrals) as well as Liouville integrable.
This generalised the observation of Bogoyavlenski\v{\i}, that the $n=4$ case of \eqref{OLV} is superintegrable \cite[Equation (7.8)]{Bog}. Indeed, the system \eqref{OLV} can be derived as a reduction of the following ODE on a 1-dimensional lattice \cite{Bog}
\begin{equation} \label{bog}
\dot{x}_i=x_i\,\big(\sum_{j=1}^{n-1} x_{i+j} - \sum_{j=1}^{n-1} x_{i-j}\big),
\end{equation}
which generalises the well-known Volterra lattice ($n=2$ case, discrete KdV equation)
$\dot{x}_i=x_i\,\big(x_{i+1} - x_{i-1}\big)$.
Namely, by considering the periodic $x_{i+2n-1}=x_i$ lattice \eqref{bog} on the invariant submanifold $x_{n+1}=x_{n+2}=\cdots=x_{2n-1}=0$ one obtains the system \eqref{OLV}. We also note that the system \eqref{OLV} possesses the $n \choose 2$ DPs $\sum_{k=i}^j x_k$, $1\leq i<j\leq n$. Using the fact that a homogeneous LV system in $n$ dimensions only needs $n-1$ additional DPs to be superintegrable, the system \eqref{OLV} was generalised in \cite{treeL,treeP} to families of $(3n-2)$-parameter superintegrable systems, in one-to-one correspondence with trees on $n$ vertices.

In \cite{KMQ} it was shown that systems of ODEs can also be modified by adding a term $b(\x; t)\x$, where $b$ is a scalar function of $\x,t$, whilst preserving its homogeneous Darboux polynomials. And, using that result, the inhomogeneous generalisation of the system (\ref{OLV}), in even dimensions,
\begin{equation} \label{GLV}
\dot{x}_i=x_i\,\big(b + \sum_{j>i} x_j - \sum_{j<i}x_j\big), \qquad i=1,\ldots,2m=n,
\end{equation}
was shown to be superintegrable as well as Liouville integrable. Kouloukas, Quispel and Vanhaecke \cite{KQV} proved another generalisation of (\ref{OLV}),
\begin{equation} \label{KQV}
\dot{x}_i=x_i\,\big(\sum_{j>i} a_jx_j - \sum_{j<i} a_jx_j\big), \qquad i=1,\ldots,n,
\end{equation}
to be Liouville and superintegrable, and Christodoulidi, Hone, Kouloukas \cite{CHK} considered systems of the
form
\begin{equation} \label{CHK}
\dot{x}_i=x_i\,\big(r_i+\sum_{j>i} a_jx_j - \sum_{j<i} a_jx_j \big), \qquad i=1,\ldots,n.
\end{equation}
They derived equations for $a_j$ and $r_i$ such that (\ref{CHK}) admits the same integrals as (\ref{KQV}).

In this paper we study inhomogeneous LV systems of the form
\begin{equation} \label{LVF}
\dot{x}_i=x_i\,\big(r_i + \sum_{j>i} x_j - \sum_{j<i}x_j\big),\qquad
r_i=\begin{cases}
b, & i=1,\ldots,k,\\
c, & i=k+1,\ldots, k+l,\\
d, & i=k+l+1,\ldots, k+l+m=n,
\end{cases}
\end{equation}
where $c=b+d$ and $km\neq 0$. We will say that such a system has three blocks (values of the parameters $b,c,d$), or two blocks, if $l=0$. We will adopt the following definitions. An $n$-dimensional system of ODEs is 
\begin{itemize}
\item {\em Liouville integrable:} there are $r/2+s$ functionally independent integrals which Poisson commute, of which $s=n-r$ are Casimir functions and $r$ is the rank of the Poisson bracket.
\item {\em Nonholonomically integrable:} the system is measure-preserving and admits $n-2$ functionally independent integrals, cf. \cite[Theorem 13]{Arnold}.
\item {\em Liouville superintegrable:} the system is Liouville integrable, and admits more functionally independent integrals than required.
\item {\em Liouville maximally superintegrable:} the system is Liouville integrable, and admits $n-1$ functionally independent integrals.
\item {\em Integrable-or-not:} the system is not a priori Liouville integrable, but admits more than $r/2+s$ functionally independent integrals.
\end{itemize} 
Each notion of integrability implies, in principle, that the system can be integrated by quadratures. For systems which are integrable-or-not, this is not necessarily the case.

We have identified nine integrable families of the form \eqref{LVF}. In even dimensions, we define
the families E1-E5:
\begin{align}
l=0, k\equiv m \equiv 1 \tag{E1}\label{E1}\\
l=0, k\equiv m \equiv 0 \tag{E2}\label{E2}\\
k\equiv 0, l\equiv m \equiv 1 \tag{E3}\label{E3}\\
k\equiv m \equiv 1, l\equiv 0 \tag{E4}\label{E4}\\
k\equiv l \equiv m \equiv 0 \tag{E5}\label{E5},
\end{align}
in odd dimensions we define O1-O3:
\begin{align}
k\equiv l\equiv m \equiv 1 \tag{O1}\label{O1}\\
k\equiv 1, l\equiv m\equiv 0, b=0 \tag{O2}\label{O2}\\
k\equiv m\equiv 0, l\equiv 1, c=0 \tag{O3}\label{O3},
\end{align}
and N1 is given by
\begin{align}
l=0, k\equiv 1, m\equiv 0 \tag{N1}\label{N1},
%k\equiv 1, l\equiv m\equiv 0, b\neq0 \text{ or } k\equiv l\equiv 0, m\equiv 1, d\neq0  \tag{N2}\label{N2}
\end{align}
where equivalence is taken modulo 2. We remark that the case $m\equiv 0, k\equiv l \equiv 1$ is equivalent to \ref{E3}, the case $k\equiv l\equiv 0, m\equiv 1, d=0$ is equivalent to the case \ref{O2}, and the case $l=0, k\equiv 0, m\equiv 1$ is equivalent to the case \ref{N1}, by a linear transformation.

We show that the E (even) and O (odd) cases are Liouville integrable, and that the N case is Nonholonomically integrable. We further classify the O and E systems with respect to their number of integrals; some are Liouville maximally superintegrable, some are nonholonomically integrable and others only have $n-3$ integrals. We also define the following cases:
\begin{align}
k\equiv 1, l\equiv m\equiv 0, b\neq 0  \tag{M1}\label{M1}\\
k\equiv m\equiv 0, l\equiv 1, c\neq 0 \tag{M2}\label{M2},
\end{align}
and note that the case $k\equiv l\equiv 0, m\equiv 1, d\neq 0$ is equivalent to \ref{M1}. Together the O,E,N, and M cases comprise the family of systems \eqref{LVF}. We also show that the entire family is separable and solvable by quadratures, given the solution to the special cases $k=l+1=m=1$ and $k=l=m=1$. This is somewhat surprising as it seems that the M systems are neither Liouville integrable, nor nonholonomically integrable.

The paper is organised as follows.
In section \ref{SecPA}, we define two sets of Darboux functions, sums of consecutive variables and products of consecutive variables with alternating exponents, and provide the quadratic Poisson algebra they satisfy (with respect to the bracket given in \cite[Equation (3.6)]{KKQTV}. We identify functions (on even-dimensional spaces)  that act as scalings.

In section \ref{SecH}, we provide the Hamiltonian theory of the general class of systems of inhomogeneous LV systems, and show that for systems whose quadratic part is the same as \eqref{LVF} the Hamiltonian (if it exists) can be written in terms of the scaling functions.

In section \ref{SecA}, we provide an alternative proof for the Liouville maximally superintegrability of equation \eqref{GLV} and of a similar but odd-dimensional system. We distinguish between integrals that depend on all the variables (which could be called global), and integrals that depend on a subset of the variables within a block, i.e. variables $x_i,x_{i+1},\ldots,x_j$, with $r_i=r_{i+1}=\cdots=r_{j}$. These will be called local integrals. Using local integrals the proofs of involutivity and independence simplify quite drastically.
In section \ref{SecC} we explain how for given values of $k,l$ and $m$, one can find out how many independent integrals exist, that are products of powers of the Darboux functions and an exponential factor. In higher dimensional systems, we find that the bulk of the independent integrals can be given in terms of local integrals. However, in all cases one needs a few global ones (at least one, at most three). The origin of these global integrals can be understood by what we have dubbed the contraction theorem. By a linear transformation, we can reduce the system to a system where one or more blocks are smaller in size. Integrals from the contracted system can be lifted to integrals for the original system. And, conveniently, lifted integrals commute with local integrals.

In section \ref{SecI}, we provide sets of sufficiently many integrals for the various notions of integrability. These sets are inferred from integrals we have calculated in low dimensions. In section \ref{SecP}, we prove that the sets of integrals postulated in section \ref{SecI}, are functionally independent in any dimension, and, in the Liouville integrable cases, sufficiently many of them Poisson commute. In section \ref{SecS}, we provide the general solution for the two and three dimensional systems of the form \eqref{LVF} with $(k,l,m)=(1,0,1)$ and $(k,l,m)=(1,1,1)$, and we provide a separation of variables for the general system in any dimension. The final section comprises a summary and outlook.
\section{A quadratic Poisson algebra} \label{SecPA}
All known integrals of the systems (\ref{OLV}-\ref{GLV}) can be expressed in terms of
\begin{equation} \label{PsQs}
P_{i,j}=x_i+x_{i+1}+\cdots+x_{j},\qquad
Q_{i,j}=x_i^{-1}x_{i+1}x_{i+2}^{-1}\cdots x_{j}^{(-1)^{j-i+1}},
\end{equation}
where it is understood that $P_{i,j}=0$ and $Q_{i,j}=1$ when $i>j$.
The Poisson brackets of these functions, with respect to
\begin{equation} \label{bra}
\{x_i,x_j\}=x_i\,x_j,\qquad i<j,
\end{equation}
are provided in the following lemma, which is proved in Appendix A.
\begin{lemma} \label{PA}
Let $[m,p]=[i,j]\cap [k,l]$, and let, taking $k\leq l$,
\begin{equation} \label{ckl}
c^{k,l}_a=\begin{cases} 0 & a<k \\ 1 & k\leq a \leq l \\ 0 & a > l, \end{cases}
\text{ when } l\not\equiv k, \text{ and }
c^{k,l}_a=\begin{cases} -1 & a<k \\ 0 & k\leq a \leq l \\ 1 & a > l \end{cases}
\text{ when } l\equiv k,
\end{equation}
where, here and in the sequel, equivalence is taken modulo 2.
The Poisson algebra of the functions (\ref{PsQs}) is a quadratic Poisson algebra:
\begin{align}
\{P_{i,j},P_{k,l}\}=&P_{i,j}\,P_{k,l}-P_{m,p}^2-2\,P_{m,p}\,P_{l+1,j}\qquad (k\geq i) \label{PP}\\
\{P_{i,j},Q_{k,l}\}=&\left(\sum_{a=i}^j c^{k,l}_a\, x_a \right) Q_{k,l} \label{PQ}\\
\{Q_{i,j},Q_{k,l}\}=&\left(\sum_{a=i}^j (-1)^{a+i+1}\, c^{k,l}_a \right) Q_{i,j}\,Q_{k,l}. \label{QQ}
\end{align}
\end{lemma}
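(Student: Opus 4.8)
The plan is to verify the three bracket formulas \eqref{PP}--\eqref{QQ} by direct computation from the defining bracket \eqref{bra}, exploiting the Leibniz rule together with the structure of $P_{i,j}$ as a sum of consecutive variables and $Q_{k,l}$ as an alternating-sign monomial. First I would record the two elementary building-block brackets: for a single variable $x_a$ one has $\{x_a,P_{k,l}\} = \big(\sum_{b\in[k,l],\,b>a}x_b - \sum_{b\in[k,l],\,b<a}x_b\big)x_a$, which is nothing but the right-hand side of the Lotka--Volterra-type vector field restricted to the block $[k,l]$; and $\{x_a,Q_{k,l}\} = \epsilon_a\, x_a\, Q_{k,l}$ where $\epsilon_a$ is an integer depending only on the relative position of $a$ with respect to $[k,l]$ and on parities — this is where the coefficients $c^{k,l}_a$ in \eqref{ckl} come from, the two cases $l\equiv k$ and $l\not\equiv k$ reflecting whether the endpoints of the monomial $Q_{k,l}$ carry the same or opposite exponent sign. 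Once these are in hand, each of the three identities is obtained by summing the Leibniz expansion over the relevant index ranges.

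For \eqref{PQ} this is immediate: $\{P_{i,j},Q_{k,l}\} = \sum_{a=i}^{j}\{x_a,Q_{k,l}\} = \big(\sum_{a=i}^{j} c^{k,l}_a x_a\big)Q_{k,l}$, once one checks the claimed normalization $\{x_a,Q_{k,l}\}=c^{k,l}_a x_a Q_{k,l}$ by a short case analysis on $a<k$, $a\in[k,l]$, $a>l$ and the parity of $l-k$. For \eqref{QQ} one writes $\{Q_{i,j},Q_{k,l}\} = Q_{i,j}\sum_{a=i}^{j}(-1)^{a-i}\,x_a^{-1}\,\{x_a,Q_{k,l}\}\cdot(\pm1)$; since $\{x_a,Q_{k,l}\}$ is a multiple of $x_a Q_{k,l}$, the variables cancel and one is left with $\big(\sum_{a=i}^j (-1)^{a+i+1}c^{k,l}_a\big)Q_{i,j}Q_{k,l}$, the only subtlety being to get the alternating sign $(-1)^{a+i+1}$ right from the exponent pattern in $Q_{i,j}$ — I would fix conventions once at the start and then the sign is bookkeeping.

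The genuinely laborious identity is \eqref{PP}. Here I would compute $\{P_{i,j},P_{k,l}\} = \sum_{a=i}^{j}\sum_{b=k}^{l}\{x_a,x_b\}$ and split the double sum according to whether $a<b$, $a=b$ (contributing zero), or $a>b$; using $\{x_a,x_b\}=\mathrm{sgn}(b-a)\,x_ax_b$ the sum collapses, after telescoping, into products of partial sums $P$. The assumption $k\geq i$ is used to organize the overlap region $[m,p]=[i,j]\cap[k,l]$ and the "tail" $[l+1,j]$; the three terms $P_{i,j}P_{k,l}$, $-P_{m,p}^2$, $-2P_{m,p}P_{l+1,j}$ correspond respectively to the generic cross terms, the self-overlap correction, and the interaction of the overlap with the part of $[i,j]$ lying beyond $[k,l]$. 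The main obstacle will be bookkeeping in this last computation: correctly partitioning the index rectangle $[i,j]\times[k,l]$ into the subregions governed by the relative positions of $i,j,k,l$, and checking that the various cases (disjoint intervals, nested intervals, partially overlapping intervals) all reduce to the single uniform formula \eqref{PP}. I would handle this by reducing to the identity $\big(\sum_{a\in S}x_a\big)\big(\sum_{b\in T}x_b\big) = \sum_{a<b}x_ax_b + \sum_{a>b}x_ax_b + \sum_{a=b}x_a^2$ for index sets $S=[i,j]$, $T=[k,l]$, and then expressing each antisymmetrized half in terms of $P$'s; the rest is algebra. Since the lemma states the computation is deferred to Appendix A, I would present the reduction to the single-variable brackets in the main text and relegate the case-by-case telescoping for \eqref{PP} to the appendix.
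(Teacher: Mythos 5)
Your proposal is correct and follows essentially the same route as the paper's Appendix A: both verify each identity by direct computation from the elementary brackets $\{x_a,x_b\}=\mathrm{sgn}(b-a)\,x_ax_b$ and $\{x_a,Q_{k,l}\}=c^{k,l}_a x_a Q_{k,l}$, sum via bilinearity/Leibniz (the paper phrases this as $(\nabla F)^T\Omega\,\nabla G$ and swaps summation order), and handle \eqref{PP} by the same case split on the relative position of $j$ with respect to $[k,l]$ together with the vanishing of self-brackets to collapse the telescoping sum.
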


\noindent
The sum in (\ref{PQ}) can be expressed as
\[
\sum_{a=i}^j c^{k,l}_a\, x_a=
\begin{cases}
P_{m,p} & k \not\equiv l,\\
-P_{i,\min(j,k-1)}+P_{\max(l+1,i),j} & k\equiv l.
\end{cases}
\]
The sum in (\ref{QQ}) has values $0$ or $\pm1$. We have
\[
\sum_{a=i}^j (-1)^{a+i+1} c^{k,l}_a = \begin{cases} (-1)^{m+i+1}\,\#_{m,p} & k \not\equiv l, \\
\#_{i,\min(j,k-1)}-(-1)^{\max(i,l+1)+i}\#_{\max(i,l+1),j} & k\equiv l,
\end{cases}
\]
where $\#_{m,p}\in\{0,1\}$ tells whether the interval $[m,p]$ contains an even or an odd number of integers. Taking $i=j$, $k=1$, and $l=n$ odd, we find $\{x_i,Q_{1,n}\}=0$ for all $1\leq i\leq n$ which shows that $Q_{1,n}$ is a Casimir function.

A {\em scaling}, with respect to variable $x_j$, on a Poisson manifold is a function $\sigma_j$ such that when $P$ is a homogeneous polynomial in $x_j$, of weight (or degree) $d$, the Poisson bracket with $\sigma_j$ is $\{P,\sigma_j\}=dP$.

\begin{lemma} \label{SC}
On the even $n$-dimensional Poisson manifold with bracket (\ref{bra}), the functions
\begin{equation} \label{sigma}
\sigma_j=\ln\left(Q_{1,j-1}^{(-1)^j}\,Q_{j+1,n}^{-1}\right)
\end{equation}
are scalings. The total scaling is
\begin{equation} \label{equality}
\sigma = \sum_{j=1}^{n} \sigma_j = \ln\left(Q_{1,n}\right).
\end{equation}
\end{lemma}
\begin{proof}
We have $\{x_k,\sigma_j\}=x_k\delta_{k,j}$, which follows from Lemma \ref{PA}. The equality (\ref{equality}) is a special case of
\begin{equation} \label{PS}
\sum_{i=k}^l \sigma_i=\begin{cases}
\ln\left(Q_{k,l}\right) & k\not\equiv l \\
\ln\left(Q_{1,k-1}^{(-1)^k}\,Q_{l+1,n}^{-1}\right)& k\equiv l, \\
\end{cases}
\end{equation}
which is proved by induction.
\end{proof}
\noindent
The expression (\ref{PS}) will be referred to as a partial scaling, and denoted by $\sigma_{k}^{l}=\sum_{i=k}^l \sigma_i$.

\section{Hamiltonians for LV-systems} \label{SecH}
The family of Lotka-Volterra equations
\begin{equation}\label{LVgen}
\dot{x}_i = x_i(r_i +  \sum_{j=1}^{n} a_{ij} x_j)
\end{equation}
with $a_{ij}$ being the elements of a matrix $\A$, is Hamiltonian, or Poisson, if $\A$ is skew and $\r\in\im(\A)$ \cite[Section 3.14]{MQ}. In terms of the Poisson bracket $\{x_i,x_j\}=a_{ij}x_ix_j$, the system (\ref{LVgen}) can be written as $\dot{\x}=\{\x,H\}$, where the Hamiltonian is given by
\begin{equation}\label{hamLV}
H = \sum_{i=1}^{n} (x_i + k_i \ln x_i), \qquad \text{ with } \A\k=\r.
\end{equation}
For equations of the form
\begin{equation} \label{FGLV}
\dot{x}_i=x_i\,\big(r_i + \sum_{j>i} x_j - \sum_{j<i}x_j\big),
\end{equation}
the $n\times n$ skew matrix $\A$ is given by
\begin{equation}\label{AM}
\A=\begin{pmatrix}
0 & 1 & 1 & \cdots & 1 \\
-1 & 0 & 1 & \cdots & 1 \\
-1 & -1 & 0 & \cdots & 1 \\
\vdots & \vdots & \vdots & \ddots & \vdots \\
-1 & -1 & -1 & \cdots &0
\end{pmatrix}
\end{equation}
and the above Poisson bracket coincides with \eqref{bra}. We distinguish two cases.
\begin{itemize}
\item The Hamiltonian case. When $n$ is even the matrix $\A$ has full rank and its inverse is
\[
\A^{-1}=\begin{pmatrix}
0 & -1 & 1 & \cdots & -1 \\
1 & 0 & -1 & \cdots & 1 \\
-1 & 1 & 0 & \cdots & -1 \\
\vdots & \vdots & \vdots & \ddots & \vdots \\
1 & -1 & 1 & \cdots &0
\end{pmatrix},
\]
so that, for $i=1,\ldots,n$,
\begin{equation} \label{ki}
k_i=-\sum_{j=1}^{i-1}(-1)^{i+j}r_j + \sum_{j=i+1}^{n}(-1)^{i+j} r_j.
\end{equation}
\item The Poisson case. When $n$ is odd the rank of $\A$ is $n-1$, and with $\u=(1,-1,1,\ldots,-1,1)^T$ we have $\ker(\A)=\sp(\u)$. The vector $\r$ is in the image of $\A$ if
\begin{equation} \label{rc}
r_n=\sum_{i=1}^{n-1}(-1)^i r_i,
\end{equation}
in which case we obtain, for $i=1,\ldots,n-1$,
\begin{equation} \label{kio}
k_i=u_i k_n - \sum_{j=1}^{i-1}(-1)^{i+j}r_j + \sum_{j=i+1}^{n-1}(-1)^{i+j} r_j,
\end{equation}
and $k_n$ is a free parameter.
\end{itemize}
The Hamiltonian of the Lotka-Volterra equation \eqref{FGLV} can be conveniently expressed in terms of the
parameters of the system, and scalings.
\begin{proposition} \label{HP}
When $n$ is even, we have
\[
H=\sum_{j=1}^{n} \left( x_j + r_j \sigma_j \right ),
\]
and, when $n$ is odd and $\r$ satisfies (\ref{rc}), we have
\[
H=\sum_{j=1}^{n} x_j + \sum_{j=1}^{n-1} r_j \sigma_j + f(Q_{1,2m+1}),
\]
where $f$ is an arbitrary function. 
\end{proposition}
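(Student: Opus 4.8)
Write $\widetilde H$ for the proposed right-hand side in each case. Since $H$ in \eqref{hamLV} is, by construction, a Hamiltonian for \eqref{FGLV} (which in the odd case is exactly the content of the hypothesis \eqref{rc}), and since two functions generating the same Hamiltonian vector field with respect to \eqref{bra} differ by a Casimir, I would reduce the statement to: (i) checking that $\{x_i,\widetilde H\}$ equals the right-hand side of \eqref{FGLV} for every $i$; and (ii) accounting for the residual Casimir freedom. (An alternative is to verify the identity $\sum_i k_i\ln x_i=\sum_i r_i\sigma_i$ directly, collecting the coefficient of each $\ln x_a$ and comparing with \eqref{ki}, resp.\ \eqref{kio}; the route via Hamiltonian vector fields is shorter.) For step (i) the quadratic part is immediate from \eqref{bra}: $\{x_i,\sum_{j=1}^n x_j\}=\sum_{j>i}x_ix_j-\sum_{j<i}x_ix_j=x_i(\sum_{j>i}x_j-\sum_{j<i}x_j)$, so everything reduces to showing that the logarithmic part of $\widetilde H$ contributes exactly $r_i x_i$ to $\{x_i,\widetilde H\}$.

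When $n$ is even this is precisely the defining property of the scalings from the lemma preceding the proposition, $\{x_i,\sigma_j\}=\delta_{ij}x_i$, so $\{x_i,\sum_j r_j\sigma_j\}=r_i x_i$ and $\widetilde H$ generates \eqref{FGLV}. As $\A$ is invertible the bracket is nondegenerate and the Hamiltonian is unique up to an additive constant; evaluating $\sum_j r_j\sigma_j$ and $\sum_j k_j\ln x_j$ at $x_1=\dots=x_n=1$ (where every $Q_{i,j}$ equals $1$) shows that constant is zero, which gives the claimed identity.

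When $n$ is odd, $Q_{1,n}$ is a Casimir (established in Section~\ref{SecPA}), so $\{x_i,f(Q_{1,n})\}=0$ for every $f$; this is exactly why an arbitrary function of $Q_{1,n}$ may be added, and it remains to show $\{x_i,\sum_{j=1}^{n-1}r_j\sigma_j\}=r_i x_i$. For $1\le i\le n-1$ this is again the scaling identity of the lemma, now applied to the even number of variables $x_1,\dots,x_{n-1}$ (which form a Poisson subalgebra under \eqref{bra}), of which $\sigma_1,\dots,\sigma_{n-1}$ are the scalings — there is no genuine scaling on the full odd-dimensional manifold, which is why these truncated scalings are the right objects here. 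For $i=n$ I would compute $\{x_n,\sigma_j\}$ directly from the log-linear form of $\sigma_j$: the signed exponents occurring in $\sigma_j$ sum to $u_j=(-1)^{j+1}$, hence $\{x_n,\sigma_j\}=-u_j x_n$, and therefore $\{x_n,\sum_{j=1}^{n-1}r_j\sigma_j\}=-(\sum_{j=1}^{n-1}u_j r_j)x_n=r_n x_n$ by \eqref{rc}. Thus $\widetilde H$ generates \eqref{FGLV}; and since $\A$ has corank one with $\ker(\A)=\sp(\u)$, every Casimir is a function of $Q_{1,n}$, so conversely every Hamiltonian has the stated form. I expect the only delicate point to be this odd case: recognising which even-dimensional scalings play the role of $\sigma_j$, evaluating the extra bracket $\{x_n,\sigma_j\}$, and closing the argument with \eqref{rc} together with the classification of the Casimirs; the even case and the quadratic computation are routine.
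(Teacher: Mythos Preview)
Your argument is correct and takes a genuinely different route from the paper's proof. The paper computes $\sum_i k_i \ln x_i$ directly: it substitutes the explicit formula \eqref{ki} for $k_i$, interchanges the order of the double products, and recognises the result as $\sum_j r_j \ln\bigl(Q_{j+1,n}^{-1}Q_{1,j-1}^{(-1)^j}\bigr) = \sum_j r_j \sigma_j$; for the odd case it simply observes that the extra term $u_i k_n$ in \eqref{kio} contributes a multiple of $\ln(1/Q_{1,n})$, hence a Casimir, which may then be replaced by an arbitrary function of $Q_{1,n}$. You instead verify that the proposed $\widetilde H$ generates the correct Hamiltonian vector field --- using the scaling identity $\{x_i,\sigma_j\}=\delta_{ij}x_i$ on the even sublattice, together with the computation $\{x_n,\sigma_j\}=-u_j x_n$ and \eqref{rc} for the remaining bracket --- and then invoke uniqueness of the Hamiltonian up to Casimirs. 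Your route is shorter and makes the role of the scalings transparent (indeed it is exactly what motivates introducing them); the paper's computation is more laborious but establishes the identity $\sum_i k_i \ln x_i = \sum_j r_j \sigma_j$ as a purely algebraic fact, independent of any reference to the Poisson structure.
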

\begin{proof}
Let $n$ be even and consider the sum of terms linear in $\k$ in (\ref{hamLV}). Using \eqref{ki}, we find
\begin{align*}
\sum_{i=1}^n k_i \ln x_i &= \ln \left( \prod_{i=1}^n x_i^{k_i} \right) \\
%&= \ln \left( \prod_{i=1}^n x_i^{(-1)^ir_1 - (-1)^ir_2 + \cdots +r_{i-1} - r_{i+1} + \cdots + (-1)^ir_n} \right)\\
&= \ln \left( \prod_{i=1}^n x_i^{\sum_{j=1}^{i-1}(-1)^{i+j+1}r_j + \sum_{j=i+1}^{n}(-1)^{i+j} r_j} \right)\\
%\end{align*}
%\begin{align*}
&= \ln \left( \prod_{i=1}^n \left( \prod_{j=1}^{i-1} x_i^{(-1)^{i+j+1}r_j}\prod_{j=i+1}^{n}x_i^{(-1)^{i+j} r_j} \right)\right)\\
&= \ln \left( \prod_{j=1}^{n-1} \prod_{i=j+1}^{n} x_i^{(-1)^{i+j+1}r_j} \right) + \ln \left( \prod_{j=2}^{n} \prod_{i=1}^{j-1} x_i^{(-1)^{i+j} r_j} \right)\\
&= \sum_{j=1}^{n-1} r_j \ln \left( \prod_{i=j+1}^{n} x_i^{(-1)^{i+j+1}} \right) + \sum_{j=2}^{n} r_j \ln \left( \prod_{i=1}^{j-1} x_i^{(-1)^{i+j}} \right)\\
&= \sum_{j=1}^{n} r_j \ln \left( Q_{j+1,n}^{-1} Q_{1,j-1}^{(-1)^j} \right).
\end{align*}
For the odd case, from \eqref{kio}, we get an extra multiple of the Casimir $1/Q_{1,n}$ which we changed into an arbitrary function of $Q_{1,n}$.
\end{proof}
Proposition \ref{HP}, together with equation \eqref{PS}, enables us to write Hamiltonians for Lotka-Volterra systems of the form \eqref{LVF} in terms of the functions of $P$ and $Q$, \eqref{PsQs}. For \eqref{OLV} we have $H_{\eqref{OLV}}=P_{1,n}$. For \eqref{GLV} we obtain $H_{\eqref{GLV}}=P_{1,n}+b\sigma=P_{1,n}+b\ln(Q_{1,n})$. For the odd $n$ variant, defined by
\begin{equation} \label{OGLV}
\dot{x}_i=x_i\,\big(r_i + \sum_{j>i} x_j - \sum_{j<i}x_j\big),\qquad
r_i=\begin{cases}
b, & i=1,\ldots,n-1,\\
0, & i=n,
\end{cases}
\end{equation}
we find $H_{\eqref{OGLV}}=P_{1,n}+b\sigma_1^{n-1}=P_{1,n}+b\ln(Q_{1,n-1})$. This system, \eqref{OGLV}, is equivalent to \ref{O2} with $k=1$. In the next section, we consider the systems \eqref{GLV} and \eqref{OGLV} as a warm-up exercise. For the E and O cases we obtain
\begin{align*}
H_{\text{\ref{E1}}}
    &=P_{1,n}+b\sigma_{1}^{k}+d\sigma_{k+1}^{n}\\
    &=P_{1,n}-b\ln(Q_{k+1,n})+d\ln(Q_{1,k}),\\[1mm]
 H_{\text{\ref{E2}}}
    &=P_{1,n}+b\sigma_{1}^{k}+d\sigma_{k+1}^{n}\\
    &=P_{1,n}+b\ln(Q_{1,k})+d\ln(Q_{k+1,n}),\\[1mm]
H_{\text{\ref{E3}}}
    &=P_{1,n}+b\sigma_{1}^{k}+(b+d)\sigma_{k+1}^{k+l}+d\sigma_{k+l+1}^{n}\\
    &=P_{1,n}+b\ln(Q_{1,k})-(b+d)\ln(Q_{1,k}Q_{k+l+1,n})+d\ln(Q_{1,k+l})\\
    &=P_{1,n}-(b+d)\ln(Q_{k+l+1,n})+d\ln(Q_{k+1,k+l}),\\[1mm]
 H_{\text{\ref{E4}}}
    &=P_{1,n}+b\sigma_{1}^{k}+(b+d)\sigma_{k+1}^{k+l}+d\sigma_{k+l+1}^{n}\\
    &=P_{1,n}-b\ln(Q_{k+1,n})+(b+d)\ln(Q_{k+1,k+l})+d\ln(Q_{1,k+l})\\
    &=P_{1,n}-b\ln(Q_{k+l+1,n})+d\ln(Q_{1,k}),\\[1mm]
 H_{\text{\ref{E5}}}
    &=P_{1,n}+b\sigma_{1}^{k}+(b+d)\sigma_{k+1}^{k+l}+d\sigma_{k+l+1}^{n}\\
    &=P_{1,n}+b\ln(Q_{1,k})+(b+d)\ln(Q_{k+1,k+l})+d\ln(Q_{k+l+1,n},\\[1mm]
% \end{align*}
% \begin{align*}
 H_{\text{\ref{O1}}}
    &=P_{1,n}+b\sigma_{1}^{k}+(b+d)\sigma_{k+1}^{k+l}+d\sigma_{k+l+1}^{n-1}\\
    &=P_{1,n}-b\ln(Q_{k+1,n-1})+(b+d)\ln(Q_{1,k}/Q_{k+l+1,n-1})+d\ln(Q_{k+l+1,n-1})\\
    &=P_{1,n}+b\ln(Q_{1,k+l})+d\ln(Q_{1,k}),\\[1mm]
H_{\text{\ref{O2}}}
    &=P_{1,n}+d\sigma_{k+1}^{k+l}+d\sigma_{k+l+1}^{n-1}\\
    &=P_{1}^{n}+d\ln(Q_{1}^{k}/Q_{k+l+1}^{n-1})+d\ln(Q_{k+l+1}^{n-1})\\
    &=P_{1}^{n}+d\ln(Q_{1}^{k}),\\[1mm]
 H_{\text{\ref{O3}}}
    &=P_{1}^{n}+b\sigma_{1}^{k}-b\sigma_{k+l+1}^{n-1}\\
    &=P_{1,n}-b\ln(Q_{k+1,n-1})-b\ln(Q_{k+l+1,n-1})\\
    &=P_{1,n}-b\ln(Q_{k+1,k+l}).
\end{align*}
The systems \ref{N1}, \ref{M1}, \ref{M2} are not Hamiltonian.

\section{Alternative proof for the integrability of (\ref{GLV}), and its odd variant} \label{SecA}
The functions (\ref{PsQs}) are Darboux functions for (\ref{OLV}). We obtain their cofactors, with $1\leq i\leq j \leq n$, from
\begin{equation} \label{CP}
\dot{P}_{i,j}=\{P_{i,j},H_{(\ref{OLV})}\}=(-P_{1,i-1}+P_{j+1,n})\,P_{i,j}
\end{equation}
and
\begin{equation} \label{CQ}
\dot{Q}_{i,j}=\{Q_{i,j},H_{(\ref{OLV})}\}=-\left(\sum_{a=1}^n c^{i,j}_a\, x_a \right)Q_{i,j}.
\end{equation}
The integrals that were used in \cite{KKQTV}, to prove the integrability of \eqref{OLV}, are given in terms of (\ref{PsQs}) as
\[
F_k=P_{1,2k}\,Q_{2k+1,n},\qquad G_k=P_{n-2k+1,n}/Q_{1,n-2k}.
\]
These integrals are homogeneous (with weight 1) and therefore\footnote{\cite[Thm 3.1]{KMQ}: If $P(x)$ is a homogeneous Darboux polynomial of degree $d$ with cofactor $C(x)$ for the system of ODEs $\dot{x} = f (x)$, then $P(x)$ is a Darboux polynomial for the system $\dot{x} = f (x)+b(x, t)x$, with cofactor $C(x)+db(x; t)$, where $b$ is a scalar function of $x$.} they are Darboux functions, with cofactor $b$, of system (\ref{GLV}). In \cite{KMQ}, the integrability of (\ref{GLV}) was proved using integrals $F^\prime_k=F_k/H_{(\ref{OLV})}$, $G^\prime_k=G_k/H_{(\ref{OLV})}$.%, and Hamiltonian $H_{(\ref{GLV})}=H_{(\ref{OLV})}+b\sigma$.

We provide an alternative proof of \cite[Theorem 4.1]{KMQ}, which states that the even dimensional system (\ref{GLV}) is both superintegrable and Liouville integrable, and prove a similar statement for the related odd dimensional system (\ref{OGLV}). This will illustrate the approach we take in this paper.

When $i \equiv j$ the cofactors of $P_{i,j}$ and $Q_{i,j}$, with respect to (\ref{OLV}), add up to zero. Hence, each product
\begin{equation} \label{Iij}
I_{i,j}=P_{i,j}\,Q_{i,j},\text{ with } i \equiv j,
\end{equation}
is an integral for (\ref{OLV}). These integrals are local, in the sense that they do not depend on all variables. Consequently they are integrals for infinitely many vector-fields (for any $n\geq j$). 
Moreover, they are homogeneous of weight 0, and so they are also integrals of the system (\ref{GLV}). We will employ these integrals as much as possible. 

\begin{theorem} \label{FT}
Both the even dimensional system (\ref{GLV}), and the odd dimensional system (\ref{OGLV}) are Liouville integrable and maximally superintegrable.
\end{theorem}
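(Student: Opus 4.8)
The plan is to prove both statements with essentially the same integrals: the local integrals $I_{i,j}=P_{i,j}Q_{i,j}$ with $i\equiv j$, introduced above, supply the bulk, and the Hamiltonian — together with the Casimir $Q_{1,n}$ in the odd case — supplies the one or two ``global'' integrals needed to reach the required counts. It is already recorded that each $I_{i,j}$ is an integral of \eqref{OLV} and, being homogeneous of weight $0$, of \eqref{GLV}. For \eqref{OGLV} one first checks, directly from Lemma \ref{PA}, that $\{I_{i,j},Q_{1,n-1}\}=0$ for $j\le n-1$ — in the Leibniz expansion the $\{P_{i,j},Q_{1,n-1}\}$ and $\{Q_{i,j},Q_{1,n-1}\}$ contributions cancel precisely because $i\equiv j$ forces the relevant alternating sum to equal $-1$ — and since $H_{\eqref{OGLV}}=P_{1,n}+b\ln Q_{1,n-1}$ while $\{I_{i,j},P_{1,n}\}=0$, this gives $\dot I_{i,j}=0$ for \eqref{OGLV} whenever $j\le n-1$. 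Also $\{I_{i,j},Q_{1,n}\}=0$ because $Q_{1,n}$ is a Casimir for odd $n$.

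The only brackets that need real work are $\{I_{i,j},I_{k,l}\}$. Expanding $\{PQ,P'Q'\}$ by Leibniz into four terms and inserting the formulas of Lemma \ref{PA}, while carefully tracking the parity-dependent cases of $c^{k,l}_a$, one finds that $I_{i,j}$ and $I_{k,l}$ Poisson commute whenever the intervals $[i,j]$ and $[k,l]$ are nested or disjoint, but \emph{not} when the two intervals share exactly one endpoint. Consequently the nested chain $I_{1,3},I_{1,5},\dots$ is in involution; adjoining $H$ (which commutes with every integral of its own flow) and, for odd $n$, the Casimir $Q_{1,n}$, gives the commuting sets $\{H,I_{1,3},\dots,I_{1,n-1}\}$ of size $n/2$ for \eqref{GLV} and $\{Q_{1,n},H,I_{1,3},\dots,I_{1,n-2}\}$ of size $(n+1)/2$ for \eqref{OGLV} — exactly the numbers $r/2+s$ demanded for Liouville integrability ($r=n$, $s=0$ for even $n$; $r=n-1$, $s=1$ for odd $n$, with $Q_{1,n}$ the required Casimir).

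For the $n-1$ functionally independent integrals one uses instead the consecutive family $I_{1,3},I_{2,4},\dots,I_{n-2,n}$ (respectively $I_{1,3},\dots,I_{n-3,n-1}$ for odd $n$), adjoining $H$ and, for odd $n$, $Q_{1,n}$. Independence follows from a single Jacobian computation in the coordinates $u_i=\ln x_i$: the differential of $\ln I_{i,i+2}$ is supported on the columns $i,i+1,i+2$ with diagonal entry $e^{u_i}/P_{i,i+2}-1$, the differential of $\ln Q_{1,n}$ is the alternating vector $\big((-1)^i\big)_i$, and the differential of $H$ carries a nonzero Euler (scaling) component whereas every weight-$0$ function has vanishing Euler component. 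A suitable square submatrix then has a triangular determinant equal, up to sign, to $\prod_i\big(e^{u_i}/P_{i,i+2}-1\big)$, which is generically nonzero; the same argument, restricted to the relevant rows, shows that the Liouville sets above are functionally independent, and this completes the proof.

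The delicate step is the involutivity bookkeeping: Lemma \ref{PA} has several parity-dependent branches, and one must verify that it is precisely the nested and disjoint interval pairs (and not the endpoint-sharing ones) that Poisson commute, so that the chosen chain is genuinely in involution and cannot be lengthened carelessly. A secondary point is keeping the counts honest across parity — $Q_{1,n}$ is a Casimir, and must be counted among the commuting integrals, only when $n$ is odd, while for even $n$ the bracket has full rank and $H$ itself provides the commuting integral beyond the nested $I_{1,2k+1}$'s.
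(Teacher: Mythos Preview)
Your proof is correct and follows essentially the same approach as the paper: the same local integrals $I_{i,j}$, the nested chain $\{I_{1,2i+1}\}$ for involutivity, and the consecutive family $\{I_{i,i+2}\}$ for the $n{-}1$ count, all via Lemma~\ref{PA}. The only difference is cosmetic --- the paper argues functional independence by passing to homogeneous coordinates $y_i=x_i/x_{i+1}$ and observing that each successive integral depends on a new $y_i$, whereas you sketch a Jacobian in log coordinates; your triangular submatrix appears cleanly on columns $3,\ldots,n$ (diagonal entries $x_{i+2}/P_{i,i+2}-1$, not $x_i/P_{i,i+2}-1$), and in the odd case you should separate $H$ from $Q_{1,n}$ via its nonzero $x_n$-entry rather than by Euler weight alone, since $Q_{1,n}$ has weight $-1$, not $0$.
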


\begin{proof} We first deal with the even dimensional system.
\begin{itemize}
\item[$n\equiv 0$] {\em Liouville integrability.} We show that each of the sets
\begin{equation} \label{sets}
\{H_{(\ref{GLV})}\} \cup \{I_{1,1+2i}\}_{i=1}^{n/2-1},\quad \{H_{(\ref{GLV})}\} \cup \{I_{2,2+2i}\}_{i=1}^{n/2-1}
\end{equation}
is a functionally independent set of $n/2$ pairwise commuting integrals.  Consider the bracket
\begin{equation} \label{br}
\{I_{i,j},I_{k,l}\}=
\{P_{i,j},P_{k,l}\}Q_{i,j}Q_{k,l}+
\{P_{i,j},Q_{k,l}\}Q_{i,j}P_{k,l}+
\{Q_{i,j},P_{k,l}\}P_{i,j}Q_{k,l}+
\{Q_{i,j},Q_{k,l}\}P_{i,j}P_{k,l}
\end{equation}
When $i\equiv j\equiv k\equiv l \mod 2$, and $[k,l]$ is a subset of $[i,j]$, i.e. $[i,j]\cap[k,l]=[k,l]$, it follows from Lemma \ref{PA} that the first term equals minus the second term, and that the last two terms both vanish. Functional independence follows from the fact that the integrals can be ordered so that each integral depends on one or more new variables, e.g., for the first set, $I_{1,3}$ depends on $x_1,x_2,x_3$, $I_{1,5}$ depends also on $x_4,x_5$, $\ldots$, $I_{1,n-1}$ depends also on $x_{n-2},x_{n-1}$, and finally $H_{(\ref{GLV})}$ depends also on $x_{n}$.

{\em Superintegrability.} To show that the union of the sets (\ref{sets}) is a functionally independent set of $n-1$ integrals we note that each integral (\ref{Iij}) can be written in terms of homogeneous variables
$y_i=x_i/x_{i+1}$, $i=1,\ldots,n-1$. In terms of variables $y_1,\ldots,y_{n-1},x_n$ the integral $I_{1,3}$ depends on $y_1,y_2$, $I_{2,4}$ depends also on $y_3$, $\ldots$ , $I_{n-2,n}$ depends also on $y_{n-1}$ and $H_{(\ref{GLV})}$ depends also on $x_{n}$.
\item[$n\equiv 1$] {\em Liouville integrability.} 
The sets
\begin{equation} \label{osets}
\{H_{\eqref{OGLV}},Q_{1,n}\} \cup \{I_{1,1+2i}\}_{i=1}^{n/2-1},\quad \{H_{\eqref{OGLV}},Q_{1,n}\} \cup \{I_{2,2+2i}\}_{i=1}^{n/2-1}
\end{equation}
are functionally independent sets of $(n-1)/2+1$ pairwise commuting integrals (including 1 Casimir).

{\em Superintegrability.} Lastly, we prove that the union of the sets (\ref{osets}) is functionally independent. In terms of variables $y_1,\ldots,y_{n-1},x_n$, the $n-3$ local integrals depend only on $y_1,\ldots,y_{n-2}$. Hence, it suffices to show that the Jacobian of $H_{\eqref{OGLV}},Q_{1,n}$ with respect to $y_{n-1},x_n$ has full rank. Using the inverse transformation,
$x_i=x_n\prod_{j=i}^{n-1} y_j$, $i<n$, we obtain
\[
H_{\eqref{OGLV}}=x_n \left( \sum_{i=1}^{n} \prod_{j=i}^{n-1} y_j \right) + b\ln\left(\frac{1}{\prod_{j=1}^{(n-1)/2} y_{2i-1}}\right),\quad
Q_{1,n}=\frac{1}{x_n\prod_{j=1}^{(n-1)/2} y_{2i-1}}.
\]
At $y_i=x_n=1$, we find
\[
\frac{D(H_{\eqref{OGLV}},Q_{1,n})}{D(y_{n-1},x_n)}=\begin{pmatrix} n-1 & n \\ 0 & -1 \end{pmatrix},
\]
which has full rank for $n\neq 1$.
\end{itemize}
\end{proof}

\section{An example with 3 blocks, and contraction} \label{SecC}
One can show that the polynomial $P=\sum_{i=1}^n c_ix_i$ is a linear Darboux polynomial of LV-system \eqref{FGLV} if and only if there exists $k,l: r_i=r_j, c_i=c_j$ if $k\leq i,j\leq l$ and $c_m=0$ if $m<k$ or $m>l$. And, the expression $\exp(P)$ is an exponential factor with linear cofactor if and only if $c_i=c_j$ for all $1\leq i,j\leq n$. 

This enables one to find out, in each case, how many integrals exist, of the form $\prod_i P_i^{\alpha_i}$ with linear Darboux polynomials or exponential factor $P_i$. We write down the set of linear DPs and exponential factors, and determine their cofactors, which are affine functions. We then create the matrix of coefficients of the cofactors, $M$, and determine a basis for the kernel of $M^T$. Each basis vector gives rise to one integral, and one can check how many of them are functionally independent.

\begin{example}
Consider the system \eqref{LVF} with $k=2,l=3,m=1$. There are 10 Darboux polynomials,
\[ 
P_1=x_1, P_2=x_2, \ldots ,P_6=x_6, P_7=x_1+x_2, P_8=x_3+x_4, P_9=x_4+x_5, P_{10}=x_3+x_4+x_5,
\]
and 1 exponential factor, $P_{11}=exp(x_1+x_2+\cdots+x_6)$. The matrix of coefficients of the affine cofactors is
\[
M=\begin{pmatrix}
b  & 0 & 1 & 1 & 1 & 1 & 1 
\\
 b  & -1 & 0 & 1 & 1 & 1 & 1 
\\
 b +d  & -1 & -1 & 0 & 1 & 1 & 1 
\\
 b +d  & -1 & -1 & -1 & 0 & 1 & 1 
\\
 b +d  & -1 & -1 & -1 & -1 & 0 & 1 
\\
 d  & -1 & -1 & -1 & -1 & -1 & 0 
\\
 b  & 0 & 0 & 1 & 1 & 1 & 1 
\\
 b +d  & -1 & -1 & 0 & 0 & 1 & 1 
\\
 b +d  & -1 & -1 & -1 & 0 & 0 & 1 
\\
 b +d  & -1 & -1 & 0 & 0 & 0 & 1 
\\
 0 & b  & b  & b +d  & b +d  & b +d  & d  
\end{pmatrix}.
\]
A basis for $\text{Ker}(M^T)$ is given by $\{(0, 0, -d, d, -d, b +d, 0, 0, 0, 0, 1)$, $(0, 0, -1, 1, -1, 0, 0, 0, 0, 1, 0)$, $(0, 0, -1, 0, -1, 0, 0,$ $1, 1, 0, 0)$, $(0, 0, -1, 1, -1, 1, 1, 0
, 0, 0, 0)\}$. These give rise to four integrals
\begin{equation} \label{ints}
\begin{split}
K_1=x_{3}^{-d} x_{4}^{d} x_{5}^{-d} x_{6}^{b +d} {\mathrm e}^{x_{1}+x_{2}+\cdots+x_{6}}
,\qquad &
K_2=\frac{x_{4} \left(x_{3}+x_{4}+x_{5}\right)}{x_{3} x_{5}}, \\
K_3= 
\frac{\left(x_{3}+x_{4}\right) \left(x_{4}+x_{5}\right)}{x_{3} x_{5}}
,\qquad
& K_4=\frac{x_{4} x_{6} \left(x_{1}+x_{2}\right)}{x_{3} x_{5}},
\end{split}
\end{equation}
of which 3 are functionally independent (we have $K_3-K_2=1$). Note that $K_1=\exp(H_\text{\ref{E3}})$, $K_2=I_{3,5}$, and $K_4=P_{1,2}Q_{3,6}$.
It can be checked, using Lemma \ref{PA}, that the three integrals Poisson commute. This is in accordance with the general results that the class \ref{E3} is Liouville integrable, and admits
$n-3$ functionally independent integrals, cf. section \ref{LSI}.
\end{example}

In the above example, we have seen that two of the three independent integrals we found were known a-priori, one related to the Hamiltonian and a local one. What about the third, could we have predicted this one as well? Moreover, the fact that $K_1$ Poisson commutes with the other integrals is clear, but can we also understand why $K_4$ would commute with $K_2$? The answer to both these questions is affirmative, according to the following theorem.

\begin{theorem}[Contraction theorem] \label{ContThm}
\noindent Consider an inhomogeneous Lotka-Volterra system of the form \eqref{FGLV} with $r_i=r$, for $i=l,l+1,\dots,m$. The new variables
\[
y_i=\begin{cases}
x_i & i<l,\\
P_{l,m} & i=l,\\
x_{i-l+m} &i>l.
\end{cases}
\]
satisfy an inhomogeneous LV system of the form \eqref{FGLV} in dimension $n+l-m$ with
\[
\tilde{r}_i=\begin{cases}
r_i & i<l,\\
r & i=l,\\
r_{i-l+m} &i>l.
\end{cases}
\]
If $F(y)$ is an integral of the $y$-system, then $F(y(x))$ is an integral of the $x$-system. If $F(y)$ and $G(y)$ are commuting integrals of the $y$-system, the integrals also commute as functions of $x$. Moreover, for any function $J(y)$, if $K(x_l,x_{l+1},\ldots,x_m)$ is a local homogeneous integral of weight 0 of the $x$-system, then $\{J(y(x)),K(x)\}=0$.
\end{theorem}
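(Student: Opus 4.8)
\noindent The statement bundles four claims: that $y=y(x)$ solves the contracted system, that integrals and commuting integrals pull back along it, and finally that any $J(y(x))$ Poisson-commutes with a local weight-$0$ integral $K$ of the $x$-system. The first three are ``soft'' --- flow conjugacy and the Poisson-map property --- and the fourth is a direct bracket computation which I expect to be the real obstacle.

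\medskip

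\noindent I would start by checking that $y(t)=y(x(t))$ satisfies the $(n+l-m)$-dimensional system of the form \eqref{FGLV} with the stated $\tilde r$. For $i<l$ we have $\dot y_i=\dot x_i$; rewriting $\sum_{j>i}x_j$ and $\sum_{j<i}x_j$ in the new variables --- the block $x_l,\dots,x_m$ entering only through its sum $y_l=P_{l,m}$, while $x_{m+1},\dots,x_n$ are relabelled $y_{l+1},\dots,y_{n+l-m}$ --- turns the right-hand side into $y_i(\tilde r_i+\sum_{j>i}y_j-\sum_{j<i}y_j)$, and the case $i>l$ is identical after the shift $a=i-l+m$. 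For $i=l$ I would use \eqref{CP}: the cofactor of $P_{l,m}$ for \eqref{OLV} is $-P_{1,l-1}+P_{m+1,n}$, and since $r_a=r$ throughout $[l,m]$,
\[
\dot y_l=\dot P_{l,m}=\sum_{a=l}^m\dot x_a=rP_{l,m}+(-P_{1,l-1}+P_{m+1,n})P_{l,m}=y_l\Big(r+\sum_{j>l}y_j-\sum_{j<l}y_j\Big);
\]
this is exactly where the constancy of $r$ on $[l,m]$ is used. Given this, the second claim is the chain rule, $\tfrac{d}{dt}F(y(x))=\sum_i\partial_{y_i}F\,\dot y_i=0$ when $F$ is a $y$-integral, and for the third claim I would check that $x\mapsto y(x)$ is a Poisson map for the bracket \eqref{bra} on both sides, i.e.\ $\{y_i,y_j\}_x=y_iy_j$ for $i<j$ --- a short case check on coordinate functions, the only mildly non-obvious cases being $\{x_i,P_{l,m}\}_x$ and $\{P_{l,m},x_j\}_x$ with one index outside the block. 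Then $\{F(y(x)),G(y(x))\}_x=\{F,G\}_y(y(x))$, so commuting $y$-integrals pull back to commuting $x$-integrals.

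\medskip

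\noindent For the last assertion, put $\tilde F:=J(y(x))$. Since $y_i$ with $i\neq l$ involves no block variable and $y_l=P_{l,m}$, $\tilde F$ depends on $x_l,\dots,x_m$ only through $s:=P_{l,m}$, say $\tilde F=\Phi(x_1,\dots,x_{l-1},s,x_{m+1},\dots,x_n)$. As $K$ depends only on $x_l,\dots,x_m$, one has $\{\tilde F,K\}_x=\sum_{j=l}^m\partial_{x_j}K\,\{\tilde F,x_j\}_x$, and computing $\{\tilde F,x_j\}_x$ for $l\le j\le m$ (splitting the summation index of $\tilde F$ into $i<l$, $l\le i\le m$, $i>m$) gives
\[
\{\tilde F,x_j\}_x=x_j\,C+\partial_s\Phi\cdot x_j\,(P_{l,j-1}-P_{j+1,m}),\qquad C:=\sum_{i<l}x_i\partial_{x_i}\Phi-\sum_{i>m}x_i\partial_{x_i}\Phi,
\]
with $C$ independent of $j$; hence
\[
\{\tilde F,K\}_x=C\sum_{j=l}^m x_j\,\partial_{x_j}K+\partial_s\Phi\sum_{j=l}^m x_j\,(P_{l,j-1}-P_{j+1,m})\,\partial_{x_j}K.
\]
The first sum vanishes by Euler's relation, since $K$ has weight $0$ in $x_l,\dots,x_m$. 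For the second I would use that $K$ is an integral: writing $\dot x_j=x_j(P_{j+1,m}-P_{l,j-1})+\beta x_j$ for $l\le j\le m$ with $\beta:=r+P_{m+1,n}-P_{1,l-1}$ independent of $j$, we get $0=\dot K=\sum_{j=l}^m x_j(P_{j+1,m}-P_{l,j-1})\partial_{x_j}K+\beta\sum_{j=l}^m x_j\partial_{x_j}K$, whose last term is again $0$ by weight-$0$ homogeneity, so $\sum_{j=l}^m x_j(P_{l,j-1}-P_{j+1,m})\partial_{x_j}K=0$ and the second sum dies too. Thus $\{\tilde F,K\}_x=0$.

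\medskip

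\noindent The main obstacle is this last step: one must arrange the expansion of $\{\tilde F,x_j\}_x$ so that all contributions of the variables outside $[l,m]$ collect into the single $j$-independent factor $C$, and one must notice that the weight-$0$ homogeneity of $K$ has to be invoked twice --- once to discard $C$, and once to reduce $\dot K$ to its intra-block part, which vanishes precisely because $K$ is an integral.
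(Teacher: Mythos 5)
Your proposal is correct and follows essentially the same route as the paper's proof: the cofactor computation for $\dot y_l$ (the paper kills the intra-block cross terms by antisymmetry, you invoke \eqref{CP}), the chain-rule pullback of integrals, the Poisson-map property for commutativity (the paper writes out the four-way sum explicitly), and for the final claim the same two ingredients — Euler's relation for weight $0$ to discard the outside-block contributions, and the observation that $\dot K=0$ plus homogeneity forces $\sum_{j=l}^m x_j(P_{l,j-1}-P_{j+1,m})\partial_{x_j}K=0$, i.e.\ that $K$ is an integral of the block Hamiltonian system generated by $y_l=P_{l,m}$. Your derivation of that last vanishing from $\dot K=0$ is a detail the paper asserts without proof, but the argument is the same.
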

\begin{proof}
\noindent To establish that the $y$-system is of the form \eqref{FGLV}, the main thing we need to prove is $\dot{y}_l = y_l(b - P_{1,l} + P_{m+1,n})$. This goes as follows:
\begin{align*}
\dot{y}_l &= \sum_{i=l}^m \dot{x}_i\\
&= \sum_{i=l}^m x_i(b-P_{1,i-1}+P_{i+1,n})\\
&= \sum_{i=l}^m x_i(b-P_{1,l-1}-P_{l,i-1}+P_{i+1,m}+P_{m+1,n})\\
&= y_l (b-P_{1,l-1}+P_{m+1,n})+Z,
\end{align*}
where
$
Z=\sum_{i=l}^m x_i(-P_{l,i-1}+P_{i+1,m})=\sum_{i,j=l}^m x_iA_{i,j}x_j=0
$
as $A$ is antisymmetric. If $F(y)$ is an integral of the $y$-system, then 
\[
\dot{x}\cdot \nabla F(y(x)) = \sum_{i<l,i>m} \dot{x}_i\frac{\partial F(y(x))}{\partial x_i} + \sum_{l\leq i\leq m} \dot{x}_i\frac{\partial F(y(x))}{\partial x_i}
= \sum_{i<l,i>l} \dot{y}_i \frac{\partial F(y)}{\partial y_i} + \dot{y}_l \frac{\partial F(y)}{\partial y_l}=\dot{y}\cdot \nabla F(y) = 0,
\]
as by the chain rule we have, for $l\leq i\leq m$, 
\[
\frac{\partial F(y(x))}{\partial x_i} = \frac{\partial F(y)}{\partial y_l}\frac{\partial y_l}{\partial x_i} = \frac{\partial F(y)}{\partial y_l}.
\]
For commuting integrals $F(y)$ and $G(y)$ we have
\begin{align*}
\{F(y(x)),G(y(x))\}&=\sum_{i,j<l,i,j>m} \frac{\partial F(y(x))}{\partial x_i} \frac{\partial G(y(x))}{\partial x_j} \{x_i,x_j\} +
\sum_{i<l,i>m,l\leq j\leq m} \frac{\partial F(y(x))}{\partial x_i} \frac{\partial G(y(x))}{\partial x_j} \{x_i,x_j\}\\
&\ \ \ +\sum_{l\leq i\leq m,j<l,j>m,} \frac{\partial F(y(x))}{\partial x_i} \frac{\partial G(y(x))}{\partial x_j} \{x_i,x_j\}+
\sum_{l\leq i,j\leq m} \frac{\partial F(y(x))}{\partial x_i} \frac{\partial G(y(x))}{\partial x_j} \{x_i,x_j\}\\
&=\sum_{i,j<l,i,j>l} \frac{\partial F(y)}{\partial y_i} \frac{\partial G(y)}{\partial y_j} \{y_i,y_j\} +
\sum_{i<l,i>l} \frac{\partial F(y)}{\partial y_i} \frac{\partial G(y)}{\partial y_l} \{y_i,y_l\}+\sum_{j<l,j>l,} \frac{\partial F(y)}{\partial y_l} \frac{\partial G(y)}{\partial y_j} \{y_l,y_j\}\\
&\ \ \ +
\frac{\partial F(y)}{\partial y_l} \frac{\partial G(y)}{\partial y_l} \{y_l,y_l\}=\{F(y),G(y)\}=0.
\end{align*}
Finally, if $K(x_l,x_{l+1},\ldots,x_m)$ is a homogeneous local integral of weight 0, then it is also an integral of the homogeneous (and Hamiltonian) LV-system $\dot{x}_i=\{x_i,y_l\}$, $i=l,\ldots,m$. Hence
\begin{align*}
\{F(y(x)),K(x)\}&=\sum_{i<l,i>m,l\leq j\leq m} \frac{\partial F(y(x))}{\partial x_i} \frac{\partial K(x)}{\partial x_j} \{x_i,x_j\} + \sum_{l\leq i,j\leq m} \frac{\partial F(y(x))}{\partial x_i} \frac{\partial K(x)}{\partial x_j} \{x_i,x_j\} \\
&=\sum_{i<l,i>m} (-1)^{[i>m]}x_i\frac{\partial F(y(x))}{\partial x_i} \sum_{l\leq j\leq m}\frac{\partial K(x)}{\partial x_j}x_j
+ \frac{\partial F(y)}{\partial y_l}\sum_{l\leq j\leq m} \frac{\partial K(x)}{\partial x_j} \{y_l,x_j\} \\
&=(\sum_{i<l,i>m} 0) + 0.
\end{align*}
\end{proof}

\begin{example}
We consider the $k=2,l=3,m=1$ case again. Armed with the contraction theorem, Theorem \ref{ContThm}, we can see where the integral $K_4$ comes from. Indeed, the odd-dimensional $y$-system with $k=1,l=3,m=1$ admits the Casimir $Q_{1,5}$ as integral, and this gives rise to $K_4$ by setting $(y_1,y_2,y_3,y_4,y_5)=(x_1+x_2,x_3,x_4,x_5,x_6)$. However, we cannot use the contraction Theorem to prove that $K_4$ Poisson commutes with $K_2$ (or $K_3$) directly. To do this, we need to define
$K_5=K_2/K_4=P_{1,3}Q_{6,6}/P_{1,2}$ which can be contracted to the case $k=2,l=1,m=1$. It now also follows that $\{K_5,K_2\}=0$, and hence that $\{K_4,K_2\}=0$. Thus, the set $\{H_\text{\ref{E3}},K_2,K_4\}$ is pairwise commuting and the system is Liouville integrable, but not Liouville superintegrable.
\end{example}

We note that while, for the purpose of this paper, the contraction theorem is formulated for systems of the form \eqref{FGLV}, it holds more generally. We give an example in Appendix B.

\section{Integrability (results)} \label{SecI}
In this section, we provide explicit sets of sufficiently many functional independent integrals for the types of integrability described in the introduction. We provide the proofs in the next section.

In section \ref{LIS}, we state that the O and E systems are Liouville integrable. In section \ref{MSI}, we list the Liouville maximally superintegrable cases. These are: \ref{E1} with $n=2$, \ref{O1} with $n=3$, and \ref{O2} with $k=1$. The cases that are nonholonomically integrable are listed in  
section \ref{NHI}: \ref{E1} with $km>1$, \ref{E2}, \ref{E3} with $l=m=1$, \ref{E4} with $k=m=1$, \ref{O1} with two of $k,l,m$ equal to 1, \ref{O2} with $k>1$, \ref{O3} with $l=1$, and \ref{N1}. In section \ref{LSI}, we list the Liouville superintegrable cases with $n-3$ integrals: \ref{E3} with $lm>1$, \ref{E4} with $k>1$ or $m>1$, \ref{E5}, \ref{O1} with at most one of $k,l,m$ equal to 1, \ref{O3} with $l>1$. And finally, in section \ref{NLSI}, we list integrable-or-not cases with $n-3$ integrals: \ref{M1}, \ref{M2}.          

\subsection{Liouville integrable systems} \label{LIS}
We denote sets of commuting local integrals by
\[
V=\{I_{1,2i+1}\}_{i=1}^{\lfloor \frac{k-1}{2}\rfloor} \cup \{I_{k+1,k+2i+1}\}_{i=1}^{\lfloor \frac{l-1}{2}\rfloor} \cup \{I_{k+l+1,k+l+2i+1}\}_{i=1}^{\lfloor \frac{m-1}{2}\rfloor}.
\] 
The numbers of integrals contained in $V$ for the various cases are given in Table \ref{Ta}.
\begin{table}[h]
\begin{center}
\begin{tabular}{r|c|c|c|c|c}
\phantom{$\dfrac12$} $n$-dim. LV-system & E1 & E2, E3, E4 & E5 & O1 & O3\\
\hline
\phantom{$\dfrac12$} Size of the set $V$ & $\frac{n}{2}-1$ & $\frac{n}{2}-2$ & $\frac{n}{2}-3$ & $\frac{n-1}{2}-1$ & $\frac{n-1}{2}-2$ 
\end{tabular}
\caption{\label{Ta} The number of functionally independent commuting local integrals.}
\end{center}
\end{table}

\begin{itemize}
\item[\eqref{E1}]
The set of $n/2$ integrals
$
\{H_\text{\ref{E1}}\} \cup V,
$
is functionally independent and commuting. 
\item[\eqref{E2}]
The set of $n/2$ integrals
\[
\{H_\text{\ref{E2}}, \frac{Q_{1,k}^bP_{1,k}^dQ_{k+1,n}^d}{P_{k+1,n}^b}\} \cup V
\]
is functionally independent and commuting.

Note that, for all values of the parameters $b$ and $d$ (with corresponding $c=b+d$), the systems \ref{E2} share the same set of $n/2-2$ parameter-independent commuting integrals $V$. Similar statements hold for \ref{E1}, \ref{E3}, \ref{E4}, \ref{E5}, \ref{O1}, \ref{O2}, \ref{O3}. 
\item[\eqref{E3}] The set of $n/2$ integrals
\[
\{H_\text{\ref{E3}}, \frac{P_{1,k}P_{k+l+1,n}}{P_{k+1,k+l}}\} \cup V
\]
is functionally independent and commuting.
\item[\eqref{E4}] The set of $n/2$ integrals
\[
\{ H_\text{\ref{E4}}, \frac{P_{1,k}P_{k+l+1,n}}{P_{k+1,k+l}}\} \cup V
\]
is functionally independent and commuting.
\item[\eqref{E5}] The set of $n/2$ integrals
\[
\{H_\text{\ref{E5}}, \frac{P_{1,k}P_{k+l+1,n}}{P_{k+1,k+l}},
\frac{P_{1,k}^dQ_{1,k+l}^bQ_{k+1,n}^d}{P_{k+l+1,n}^b}
\} \cup V
\]
is functionally independent and commuting.
\item[\eqref{O1}] The set of $(n-1)/2$ integrals and 1 Casimir
$
\{ H_\text{\ref{O1}}, Q_{1,n}\} \cup V
$
is functionally independent and commuting.
\item[\eqref{O2}] The set of $(n-1)/2$ integrals and 1 Casimir
$
\{H_\text{\ref{O2}}, Q_{1,n}\} \cup \{I_{k+1,k+2i+1}\}_{i=1}^{\lfloor \frac{l+m-1}{2}\rfloor}
$
is functionally independent and commuting. Note, we can have $l=0$ or $m=0$.

\item[\eqref{O3}] The set of $(n-1)/2$ integrals and 1 Casimir
\[
\{H_\text{\ref{O3}}, \frac{P_{1,k}P_{k+l+1,n}}{P_{k+1,k+l}}, Q_{1,n}\} \cup V
\]
is functionally independent and commuting.   
\end{itemize}
\subsection{Liouville maximally superintegrable systems} \label{MSI}
\begin{itemize}
\item[\eqref{E1}] When $k=m=1$ the 2-dimensional system has one integral, $H_\text{\ref{E1}}$.
\item[\eqref{O1}] When $k=l=m=1$, the 3-dimensional system has two functionally independent integrals, $H_\text{\ref{O1}}$ and the Casimir $Q_{1,3}$.
\item[\eqref{O2}] When $k=1$ the set of $n-1$ integrals
$\{H_\text{\ref{O2}}, Q_{1,n}\} \cup \{I_{i,i+2}\}_{i=1}^{k-2} \cup \{I_{i+1,i+3}\}_{i=1}^{l+m-2}$
is functionally independent.
\end{itemize}
      
\subsection{Nonholonomically integrable systems (quasi-superintegrable)} \label{NHI}
For any $k,l,m$ we denote
\[
X=X(k,l,m)= \{I_{i,i+2}\}_{i=1}^{k-2} \cup \{I_{k+i,k+i+2}\}_{i=1}^{l-2} \cup \{I_{k+l+i,k+l+i+2}\}_{i=1}^{m-2}.
\]
If 1 appears $j$ times in the sequence $k,l,m$ then $X$ contains $n-6 + j$ integrals when $l\neq 0$, or $n-4 + j$ integrals when $l=0$. In the sequel, we will also use the Iverson bracket:
\[
[S]=\begin{cases} 1 & S \text{ is true} \\ 0 & S \text{ is false}. \end{cases}
\] 
\begin{itemize}
\item[\eqref{E1}] When $km>1$ the set of $n-2$ integrals
\[
\{H_\text{\ref{E1}},\frac{[k>1,m>1]P_{1,2}^b Q_{3,n}^b Q_{1,n-2}^d}{P_{n-1,n}^d}\} \cup X
\]
is functionally independent.
\item[\eqref{E2}]
The set of $n-2$ integrals
\[
\{H_\text{\ref{E2}}, \frac{Q_{1,k}^bP_{1,k}^dQ_{k+1,n}^d}{P_{k+1,n}^b}\} \cup
X
\]
is functionally independent.
\item[\eqref{E3}] When $l= m= 1$, the set of $n-2$ integrals
$
\{H_\text{\ref{E3}}, \frac{P_{1,k}P_{k+l+1,n}}{P_{k+1,k+l}}\} \cup X
$
is functionally independent.
\item[\eqref{E4}] When $k=m=1$, the set of $n-2$ integrals
$
\{ H_\text{\ref{E4}}, \frac{P_{1,k}P_{k+l+1,n}}{P_{k+1,k+l}}\}\} \cup X
$
is functionally independent.
\item[\eqref{O1}] When 1 appears twice in $k,l,m$, the set of $n-2$ integrals
$
\{ H_\text{\ref{O1}}, Q_{1,n}\} \cup X
$
is functionally independent.

\item[\eqref{O2}] When $k>1$ the set of $n-2$ integrals
\[
\{H_\text{\ref{O2}}, Q_{1,n}\}
\cup \{I_{i,i+2}\}_{i=1}^{k-2} \cup \{I_{k+i,k+i+2}\}_{i=1}^{l+m-2}
\]
is functionally independent.
\item[\eqref{O3}] When $l=1$, the set of $n-2$ integrals
\[
\{H_\text{\ref{O3}}, \frac{P_{1,k}P_{k+2,n}}{x_{k+1}}, Q_{1,n}\} \cup X
\]
is functionally independent.

\item[\eqref{N1}] The set of $n-2$ integrals
\[
\{ \frac{\e^{P_{1,n}}P_{k+1,n}^b}{P_{1,k}^{d}},
[k>1]P_{1,2}^bP_{k+1,n}^bQ_{3,k}^bQ_{1,n}^d\} \cup X
\]
is functionally independent.
\end{itemize}
\subsection{Liouville superintegrable systems with $n-3$ integrals} \label{LSI}
\begin{itemize}
\item[\eqref{E3}] When $lm>1$ the set of $n-3$ integrals
\[
\{H_\text{\ref{E3}},\frac{P_{1,k}P_{k+l+1,n}}{P_{k+1,k+l}}\},
\frac{[l>1,m>1]P_{n-1,n}^dQ_{1,k}^{b+d}}{Q_{1,n-2}^dP_{k+1,k+2}^{b+d}Q_{k+3,n}^{b+d}}
\} \cup X
\]
is functionally independent.
\item[\eqref{E4}] When $k>1$, the set of $n-3$ integrals
\[
\{H_\text{\ref{E4}}, \frac{P_{1,k}P_{k+l+1,n}}{P_{k+1,k+l}},
\frac{[m>1]P_{1,2}^bQ_{3,n}^b Q_{1,n-2}^d}{P_{n-1,n}^d}\}
\cup X
\]
is functionally independent.

\noindent The case $k=1,m>1$ is equivalent to the case $k>1,m=1$, included above.

\item[\eqref{E5}] The set of $n-3$ integrals
\[
\{H_\text{\ref{E5}}, \frac{P_{1,k}P_{k+l+1,n}}{P_{k+1,k+l}},
\frac{P_{1,k}^dQ_{1,k+l}^bQ_{k+1,n}^d}{P_{k+l+1,n}^b}\}
\cup X
\]
is functionally independent.

\item[\eqref{O1}] When 1 appears at most once in $k,l,m$ the set of $n-3$ integrals
\[
\{ H_\text{\ref{O1}}, Q_{1,n},
\frac{[k>1,l>1,m>1]P_{k-1,k}^bQ_{k+1,k+2}^bQ_{k-1,k}^{b+d}P_{n-1,n}^d}{P_{k+1,k+2}^{b+d}Q_{1,k-2}^dQ_{k+3,n-2}^d}
\}
\cup X
\]
is functionally independent.

\item[\eqref{O3}] When $l>1$ the set of $n-3$ integrals
\[
\{H_\text{\ref{O3}}, \frac{P_{1,k}P_{k+l+1,n}}{P_{k+1,k+l}}, Q_{1,n}\} \cup X
\]
is functionally independent.
\end{itemize}

\subsection{Integrable-or-not systems with $n-3$ integrals} \label{NLSI}
\begin{itemize}
\item[\eqref{M1}] The set of $n-3$ integrals
\[
\{
\frac{\e^{P_{1,n}}P_{k+1,k+l}^b}{P_{1,k}^{b+d}},
\frac{P_{1,k}P_{k+l+1,n}}{P_{k+1,k+l}},
\frac{[k>1]Q_{k-1,k+l}^bP_{k-1,k}^b}{Q_{1,n}^dP_{k+1,k+l}^b}
\} 
\cup X
\]
is functionally independent.

\item[\eqref{M2}] The set of $n-3$ integrals
\[
\{
\frac{\e^{P_{1,n}}P_{k+1,k+l}^b}{P_{1,k}^{b+d}},
\frac{P_{1,k}P_{k+l+1,n}}{P_{k+1,k+l}},
\frac{[l>1]P_{1,k}^{b+d}Q_{1,n}^b}{P_{k+1,k+2}^{b+d}Q_{k+3,n}^{b+d}}
\}
\cup X
\]
is functionally independent.
\end{itemize}

\section{Integrability (proof of results)} \label{SecP}
We provide proof that the sets of integrals provided in the previous section are functionally independent and, in the Liouville integrable cases, sufficiently many of them commute.
\subsection{The Liouville integrability of the E and O cases} \label{pli}
\begin{itemize}
\item[\eqref{E1}] We know from the proof of Theorem 4, that each set in the union
$V=\{I_{1,2i+1}\}_{i=1}^{\lfloor \frac{k-1}{2}\rfloor} \cup \{I_{k+1,k+2i+1}\}_{i=1}^{\lfloor \frac{m-1}{2}\rfloor}$ is pairwise commuting. We now show that $V$ is pairwise commuting. Consider the bracket, with $i\leq k$ odd and $j>k+1$ even,
\begin{align*}
\{I_{1,i},I_{k+1,j}\}&=
\{P_{1,i},P_{k+1,j}\}Q_{1,i}Q_{k+1,j}+
\{P_{1,i},Q_{k+1,j}\}Q_{1,i}P_{k+1,j}+
\{Q_{1,i},P_{k+1,j}\}P_{1,i}Q_{k+1,j}\\
&\ \ \ +
\{Q_{1,i},Q_{k+1,j}\}P_{1,i}P_{k+1,j}.
\end{align*}
It follows from Lemma \ref{PA} that the first and the last terms equal $P_{1,i}P_{k+1,j}Q_{1,i}Q_{k+1,j}$, and the middle two terms equal $-P_{1,i}P_{k+1,j}Q_{1,i}Q_{k+1,j}$. Clearly, the integrals can be ordered so that each integral depends on one or more new variables.
\item[\eqref{E2}] The proof that $V$ is pairwise commuting is similar to the \ref{E1} case, but with $j>k+1$ odd. We need to show that $K= Q_{1,k}^bP_{1,k}^dQ_{k+1,n}^d/P_{k+1,n}^b$ is a commuting integral. This can be done by a two-step contraction to the $k=m=1$ case, with Hamiltonian $y_1+y_2-d\ln(y_1)+b\ln(y_2)$. We get the integral $K$ by substitution of $(y_1,y_2)=(P_{1,k},P_{k+1,n})$, subtracting the result from $H_\text{\ref{E2}}$ and exponentiation. By the contraction theorem it then commutes with all local integrals. To prove independence, we notice that the local integrals only depend on the variable
    $x_1,\ldots,x_{k-1},x_{k+1},\ldots,x_{n-1}$, and that at $x_1=\cdots=x_n=1$ we have
\[
\frac{D(H_{\ref{E2}},K)}{D(x_k,x_n)}=    \begin{pmatrix}
    1+b  & 1+d  
\\
 \frac{k^{d}}{m^{b}}\left(b+\frac{d}{k}\right)& \frac{k^{d} }{m^{b}}\left(d-\frac{b}{m} \right)
\end{pmatrix}.
\]
\item[\eqref{E3}] That $V$ is pairwise commuting follows from the above considerations. The function $K=P_{1,k}P_{k+l+1,n}/P_{k+1,k+l}$
is obtained, by a 3-step contraction to the case $k=l=m=1$, which admits the Casimir $y_1y_3/y_2$, and commutes with all the local integrals. To prove independence we introduce
$y_i=x_i/x_{i+1}$, $i=k+l+1,\ldots,n$, and note that the local integrals only depend on $x_1,\ldots,x_{k-1},x_{k+1},\ldots,x_{k+l},y_{k+l+1},\ldots,y_{n-1}$ and that at $x_1=\cdots=y_{n-1}=x_n=1$ we have
\[
\frac{D(H_{\ref{E3}},K)}{D(x_k,x_n)}=    \begin{pmatrix}
    1  & m+b+d  
\\
 \frac{m}{l}& \frac{km}{l}
\end{pmatrix}.
\] 
\item[\eqref{E4}] The proof is similar to the above.
\item[\eqref{E5}] We perform a 3-step contraction to the case $k=l=m=1$. This a admits the Casimir $y_1y_3/y_2$, which gives rise to $K=P_{1,k}P_{k+l+1,n}/P_{k+1,k+l}$, and has Hamiltonian $y_1+y_2+y_3-b\ln(y_1/y_2)-d\ln(y_2/y_3)$ to which we add $(b-d)\ln(y_1y_3/y_2))$ to get $T(y_1,y_2,y_3)=y_1+y_2+y_3+b\ln(y_3)-d\ln(y_1)$. The integral $L=P_{1,k}^dQ_{1,k+l}^bQ_{k+1,n}^d/P_{k+l+1,n}^b$ is obtained by as $\exp(H_\text{\ref{E5}}-T(P_{1,k},P_{k+1,k+l},P_{k+l+1,n}))$, and it commutes with all local integrals as it is a function of integrals that commute with all local integrals.
We do not need to change variables, as the local integrals do not depend on $x_k$, $x_{k+l}$ and $x_n$. The relevant Jacobian matrix
\[
\frac{D(H_{\ref{E5}},K,L)}{D(x_k,x_{k+l},x_n)}=    \begin{pmatrix}
1+b  & 1+b +d  & 1+d  
\\
 \frac{k^{d-1}}{m^{b}} \left(b k +d \right) & \frac{k^{d}}{m^{b}}\left(b +d \right) & \frac{k^{d}}{m^{b+1}} \left(d m - b \right) 
\\[1mm]
 \frac{m}{l} & -\frac{k m}{l^{2}} & \frac{k}{l} 
\end{pmatrix}
\]
has full rank.
\item[\eqref{O1}] All integrals commute. The local integrals depend on all the variables, so we need to introduce two sets of homogeneous variables, e.g. $y_i=x_i/x_{i+1}$ for $i=1,\ldots,k-1$ and for $i=k+1,\ldots,k+l-1$, and then check that the Jacobian of $H_\text{\ref{O1}},Q_{1,n}$ with respect to $x_k,x_{k+l}$ has full rank.
\item[\eqref{O2}] All integrals commute. The local integrals do not depend on $x_n$. We only need one set of homogeneous variables, $y_i=x_i/x_{i+1}$ for $i=1,\ldots,k-1$ and check that the Jacobian of $H_\text{\ref{O2}},Q_{1,n}$ with respect to $x_k,x_{n}$ has full rank.
\item[\eqref{O3}] The integral $K=P_{1,k}P_{k+l+1,n}/P_{k+1,k+l}$ comes from the Casimir of the 3-step contracted system $k=l=m=1$ and hence commutes with all other integrals. The local integrals do not depend on $x_k,x_n$, one needs to introduce one set of homogeneous variables, $y_i=x_i/x_{i+1}$ for $i=k+1,\ldots,k+l-1$ and check that the Jacobian of $H_\text{\ref{O3}},K,Q_{1,n}$ with respect to $x_k,x_{k+l},x_{n}$ has full rank. 
\end{itemize}

\subsection{The maximally superintegrable cases}
For the lowest dimensional systems \ref{E1} and \ref{O1} the results are known. For the case \ref{O2} with $k=1$, it is clear that the integrals Poisson-commute. To prove independence, we introduce two set of homogeneous variables, $y_i=x_i/x_{i+1}$ for $i=1,\ldots,k-1$, and for $i=k+l+1,\ldots,n$, which we use, together with $x_k,x_n$ to express all integrals. The local integrals can be ordered so that each integral depends on a new variable, and they don't depend on $x_k,x_n$. One can check that the Jacobian of $H_\text{\ref{O2}},Q_{1,n}$ with respect to $x_k,x_{n}$ has full rank.
\subsection{The nonholonomically integrable cases}
To prove that the LV-systems are nonholonomically integrable includes showing that they are measure preserving. We do this for all cases simultaneously in the following proposition.
\begin{proposition} \label{MP}
Lotka-Volterra equations (\ref{LVgen}) with $a_{ii}=0$ are measure preserving with density $M=\left(\prod_{i=1}^n x_i\right)^{-1}$.
\end{proposition}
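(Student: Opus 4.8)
The plan is to invoke the standard criterion for an invariant measure: the density $M$ is preserved by the flow of $\dot\x=\f(\x)$ if and only if $\operatorname{div}(M\f)=0$, equivalently $\f\cdot\nabla\ln M+\operatorname{div}\f=0$. So I would first record the components of the vector field, $f_i=x_i\big(r_i+\sum_{j=1}^n a_{ij}x_j\big)$, and then compute the two ingredients of this identity separately.

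For $\operatorname{div}\f$: differentiating $f_i$ in $x_i$ produces $\big(r_i+\sum_j a_{ij}x_j\big)+x_i a_{ii}$, and the hypothesis $a_{ii}=0$ removes the second term, so $\operatorname{div}\f=\sum_{i=1}^n\big(r_i+\sum_{j=1}^n a_{ij}x_j\big)$. For the logarithmic term: from $\ln M=-\sum_{k=1}^n\ln x_k$ one has $\partial_{x_i}\ln M=-x_i^{-1}$, hence $\f\cdot\nabla\ln M=\sum_{i=1}^n f_i\,(-x_i^{-1})=-\sum_{i=1}^n\big(r_i+\sum_j a_{ij}x_j\big)$. Adding the two expressions gives $0$, which is the assertion. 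Alternatively, and even more directly, one can differentiate $M\f$ itself: $Mf_i=\big(r_i+\sum_j a_{ij}x_j\big)\prod_{k\neq i}x_k^{-1}$ depends on $x_i$ only through the single summand $a_{ii}x_i$, so $\partial_{x_i}(Mf_i)=a_{ii}\prod_{k\neq i}x_k^{-1}=0$ termwise, and $\operatorname{div}(M\f)=0$.

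I do not expect a genuine obstacle here: once the invariant-measure criterion is recalled, the verification is essentially a one-line calculation. The only point worth flagging is that the cancellation rests solely on the vanishing of the diagonal entries $a_{ii}$, not on the skew-symmetry of $\A$; this is why the proposition covers every system of the form \eqref{LVF} treated in the paper — including the non-Hamiltonian N and M cases — and thereby supplies the measure-preservation hypothesis needed for all the nonholonomic integrability claims.
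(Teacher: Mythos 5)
Your proposal is correct and, in its second (``even more directly'') form, is precisely the paper's own argument: the paper simply observes that $Mf_i$ does not depend on $x_i$ because $a_{ii}=0$, so $\sum_i \partial_{x_i}(Mf_i)=0$. The first computation via $\f\cdot\nabla\ln M+\operatorname{div}\f$ is just the same cancellation written logarithmically, and your closing remark that only $a_{ii}=0$ (not skew-symmetry) is needed matches the paper's intent of covering the non-Hamiltonian N and M cases.
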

\begin{proof}
Denoting the right hand side of (\ref{LVgen}) by $f_i$, for all $i$ the expression $Mf_i$ does not depend on $x_i$, hence $\sum_{i=1}^n \frac{\partial Mf_i}{\partial x_i} = 0$.
\end{proof}

\noindent
We now prove the existence and independence of $n-2$ integrals for each of the cases mentioned in section \ref{NHI}.
\begin{itemize}
\item[\eqref{E1}] When $k=1$ or $m=1$ there are $n-3$ integrals in the set $X$, the local
integrals do not depend on $x_1$ or $x_n$ respectively, and $H_\text{\ref{E1}}$ does depend on that variable. When $k>1,m>1$, we have an additional nonlocal integral $K=P_{1,2}^b Q_{3,n}^b Q_{1,n-2}^d/P_{n-1,n}^d$. One can prove that $K$ is an integral by using the cofactors of its constituents\footnote{The cofactors of $P_{1_2}$, $Q_{3,n}$, $Q_{1,n-2}$, $P_{n-1,n}$ are $P_{3,n}+b$, $-P_{3,n}-b+d$, $-P_{1,n-2}-b+d$, $-P_{1,n-2}+d$ respectively, cf. \eqref{CP} and \eqref{CQ}.}. However, we prefer to understand whether and how it arises from the contraction theorem. This integral can be obtained as follows. Let $T$ be the integral obtained from the Hamiltonian of the 2-step contracted system $k=m=1$, and let $R$ be the integral obtained from a 2-step contraction to the $k-1,m-1$ \ref{E2} case, by substituting $y_1=x_1+x_2$, $y_i=x_{i+1}$, $i=2,\ldots,n-1$ and $y_{n-2}=x_{n-1}+x_n$ in $R(y)=P_{1,k-1}^dQ_{k,n-2}^dQ_{1,k-1}^b/P_{k,n-2}^b$. Then $K=\exp(H_\text{\ref{E1}}-T)/R$.
To prove independence, one introduces two sets of homogeneous variables,
$y_i=x_i/x_{i+1}$ for $i=1,\ldots,k-1$, and for $i=k+1,\ldots,n-1$. The local integrals do not depend on $x_k,x_n$ and the Jacobian of $H_\text{\ref{E1}},K$ with respect to $x_k,x_n$ has full rank.
\item[\eqref{E2}] The integral $K= Q_{1,k}^bP_{1,k}^dQ_{k+1,n}^d/P_{k+1,n}^b$ was derived in section \ref{pli}.
To prove independence, one introduces two sets of homogeneous variables,
$y_i=x_i/x_{i+1}$ for $i=1,\ldots,k-1$, and for $i=k+1,\ldots,n-1$. The local integrals do not depend on $x_k,x_n$ and the Jacobian of $H_\text{\ref{E1}},K$ with respect to $x_k,x_n$ has full rank.
\item[\eqref{E3}] 
The integral $K=P_{1,k}P_{k+l+1,n}/P_{k+1,k+l}$ was derived in section \ref{pli}.
The Jacobian of $H_\text{\ref{E3}},K$ with respect to $x_{n-1},x_n$ has full rank.    
\item[\eqref{E4}] Similar to the previous case. With $K=P_{1,k}P_{k+l+1,n}/P_{k+1,k+l}$, the Jacobian of $H_\text{\ref{E4}},K$ with respect to $x_{1},x_n$ has full rank.
\item[\eqref{O1}] Let
$
(i,j)=\begin{cases}
(k,k+l) & k=l=1,\\
(k,n) & k=m=1,\\
(k+l,n) & l=m=1.
\end{cases}
$
The Jacobian of $H_\text{\ref{E5}},Q_{1,n}$ with respect to $x_{i},x_j$ has full rank.
\item[\eqref{O2}] To prove independence, one introduces two sets of homogeneous variables,
$y_i=x_i/x_{i+1}$ for $i=1,\ldots,k-1$, and for $i=k+1,\ldots,n-1$. The local integrals do not depend on $x_k,x_n$ and the Jacobian of $H_\text{\ref{O2}},Q(1,n)$ with respect to $x_k,x_n$ has full rank.
\item[\eqref{O3}] To prove independence, one introduces two sets of homogeneous variables,
$y_i=x_i/x_{i+1}$ for $i=1,\ldots,k-1$, and for $i=k+2,\ldots,n-1$. The local integrals do not depend on $x_k,x_{k+1},x_n$ and the Jacobian of $H_\text{\ref{O3}},P_{1,k}P_{k+2,n}/x_{k+1}$, $Q(1,n)$ with respect to $x_k,x_{k+1},x_n$ has full rank.
\item[\eqref{N1}] The integral $E=\e^{P_{1,n}}P_{k+1,n}^b/P_{1,k}^{d}$ is obtained from the Hamiltonian of the contracted \ref{E1} system $k=m=1$. For $k=1$ the set is functionally independent because the local integrals do not depend on $x_1$, and $E$ does. For $k>1$, we also have the nonlocal integral $K=P_{1,2}^bP_{k+1,n}^bQ_{3,k}^bQ_{1,n}^d$.
    This integral can be obtained from the contracted \ref{E2} system with $\tilde{k}=k-1,\tilde{m}=m$. Let $R(y)=P_{k,n-1}^b/(Q_{1,k-1}^bP_{1,k-1}^dQ_{k,n-1}^d)$. Substitute $y_1=x_1+x_2$, $y_i=x_{i+1}$, $i=2,\ldots,n-1$. Then $K=R(y(x))I_{1,3}^d$. To prove independence, one introduces two sets of homogeneous variables,
$y_i=x_i/x_{i+1}$ for $i=1,\ldots,k-1$, and for $i=k+1,\ldots,n-1$, and verifies that the Jacobian of $E,K$ with respect to $x_k,x_n$ has full rank.
    \end{itemize}
\subsection{The Liouville superintegrable cases with $n-3$ integrals}
\begin{itemize}
\item[\eqref{E3}] Let $J=P_{1,k}P_{k+l+1,n}/P_{k+1,k+l}$ be the integral obtained from the Casimir of the contracted system $k=l=m=1$. When either $l=1$ or $m=1$ it suffices to introduce 1 set of homogeneous variables, $y_i=x_i/x_{i+1}$ for $i=1,\ldots,k-1$ and then verify that the Jacobian of $H_\text{\ref{E3}},J$ with respect to either $x_k,x_{k+1}$ or
    $x_k,x_n$ has full rank. When $l>1,m>1$ we have the additional nonlocal integral $K=
(P_{n-1,n}/Q_{1,n-2})^d/(P_{k+1,k+2}Q_{k+3,n}/Q_{1,k})^{b+d}$. This integral is obtained from a contraction to an \ref{E5} system with $\tilde{k}=k,\tilde{l}=l-1,\tilde{m}=m-1$.
Let $T(y)=P_{1,k}^dQ_{k+1,n-2}^dQ_{1,k+l-1}^b/P_{k+l,n-2}^b$, then $T(x)$ is obtained by $y_{k+1}=x_{k+1}+x_{k+2}$, $y_i=x_{i+1}$, $i=k+2,\ldots,n-3$, $y_{n-2}=x_{n-1}+x_{n}$, and we have $K=TJ^bI_{k+1,k+l}^b/R^{b+d}$, where $R=P_{1,k}Q_{k+1,n}$ is a nonlocal integral obtained from the Casimir of the contracted \ref{O1} system with $\tilde{k}=1,\tilde{l}=l,\tilde{m}=m$. We could have taken $T$ as the additional integral (but not $R$), however, the expression $K$ is nicer. 

For independence one needs to introduce 3 sets of homogeneous variables $y_i=x_i/x_{i+1}$, with $i=1,\ldots,k-1$, $i=k+1,\ldots,k+l-1$ and $i=k+l+1,\ldots,n-1$, and verify that the Jacobian of $H_\text{\ref{E3}},J,K$ with respect to $x_k,x_{k+l},x_n$ has full rank.
\item[\eqref{E4}] Let $J=P_{1,k}P_{k+l+1,n}/P_{k+1,k+l}$. When $m=1$ the local integrals do not depend on $x_n$. One introduces 1 set of homogeneous variables, e.g. $y_i=x_i/x_{i+1}$, with $i=1,\ldots,k-1$ and verifies  that the Jacobian of $H_\text{\ref{E4}},J$ with respect to $x_k,x_n$ has full rank. When $k>1,m>1$ we have an additional nonlocal integral $K=P_{1,2}^bQ_{3,n}^bQ_{1,n-2}^d/P_{n-1,n}^d$.
Similar to the previous case it can be obtained from a contraction to an \ref{E5} system, this time with $\tilde{k}=k-1,\tilde{l}=l,\tilde{m}=m-1$.
Let $T(y)=P_{1,k-1}^dQ_{k,n-2}^dQ_{1,k+l-1}^b/P_{k+l,n-2}^b$, then $T(x)$ is obtained by $y_{1}=x_{1}+x_{2}$, $y_i=x_{i+1}$, $i=2,\ldots,n-3$, $y_{n-2}=x_{n-1}+x_{n}$, and $K=I_{1,k}^d/I_{k+l+1,n}^b/T$. The proof of independence is similar to the \ref{E3} $l>1,m>1$ case.
\item[\eqref{E5}] The proof of independence is similar to the \ref{E3} $l>1,m>1$ case.
\item[\eqref{O1}] When 1 appears at once in $k,l,m$ one introduces 1 set of inhomogeneous variables, e.g, $y_i=x_i/x_{i+1}$, with $i=1,\ldots,k-1$ if $m=1$, and verifies that the Jacobian of $H_\text{\ref{O1}},Q(1,n)$ with respect to $x_k,x_n$ has full rank.
    We have an additional nonlocal integral $K=P_{k-1,k}^b P_{n-1,n}^d Q_{k-1,k+2}^b/P_{k+1,k+2}^{b+d}/Q_{1,k}^d/Q_{k+3,n-2}^d$,
    when none of $k,l,m$ is equal to 1. This integral can be obtained by contraction from the \ref{E5} case with $\tilde{k}=k-1,\tilde{l}=l-1,\tilde{m}=m-1$. We define $T(y)=P_{1,k-1}^dQ_{k,n-3}^dQ_{1,k+l-2}^b/P_{k+l-1,n-3}^b$ and take $y_{k-1}=x_{k-1}+x_k$, $y_k=x_{k+1}+x_{k+2}$, $y_i=x_{i+2}$, $i=k+1,\ldots,n-4$, $y_{n-3}=x_{n-1}+x_n$. Then we have
    $K=TI_{k+l+1,n}^b/I_{1,k}^d/Q_{1,n}^b$. The proof of independence is similar to the \ref{E3} $l>1,m>1$ case.

\item[\eqref{O3}] The proof of independence is similar to the \ref{E3} $l>1,m>1$ case.
\end{itemize}

\begin{itemize} 
\item[\eqref{M1}] The integral $J= P_{1,k}P_{k+l+1,n}/P_{k+1,k+l}$ is obtained from contraction to the $n=3$ \ref{O1} system. The integral
    $K=\e^{P_{1,n}}P_{k+1,k+l}^b/P_{1,k}^{b+d}$ is obtained from the exponentiated Hamiltonian $T(y)=\exp(P_{1,m+2})y_2^b/y_1^{b+d}$ of the contracted system with $\tilde{k}=1,\tilde{l}=1,\tilde{m}=m$ (equiv. to \ref{E3}). For $k>1$ we obtain the additional nonlocal integral $L=(Q_{k-1,k+l}P_{k-1,k}/P_{k+1,k+l})^b/Q_{1,n}^d$
from a contraction to an \ref{E5} system with $\tilde{k}=k-1,\tilde{l}=l,\tilde{m}=m$, as follows. Define $G(y)=P_{1,k-1}^dQ_{k,n-1}^dQ_{1,k+l-1}^b/P_{k+l,n-1}^b
$ and set $y_{k-1}=x_{k-1}+x_k$, $y_i=x_{i+1}$, $i=k,\ldots,n-1$. Then $L=GJ^b/I_{1,k}^{b+d}$. The proof of independence is similar to the \ref{E3} $l>1,m>1$ case. 
\item[\eqref{M2}] The integrals $J,K$ have the same form as in the \ref{M1} case. For $l>1$, we can obtain the additional nonlocal integral $L=(P_{1,k}/P_{k+1,k+2}/Q_{k+3,n})^{b+d}Q_{1,n}^b
$ from a contraction to an \ref{E5} system with $\tilde{k}=k,\tilde{l}=l-1,\tilde{m}=m$, as follows. Define $G(y)=P_{1,k}^dQ_{k+1,n-1}^dQ_{1,k+l-1}^b/P_{k+l,n-1}^b
$ and set $y_{k+1}=x_{k+1}+x_{k+2}$, $y_i=x_{i+1}$, $i=k+2,\ldots,n-1$. Then $L=GJ^bI_{k+1,k+l}^b$. The proof of independence is similar to the \ref{E3} $l>1,m>1$ case. 
\end{itemize}

\section{Solvability} \label{SecS}
Using the contraction theorem, Theorem \ref{ContThm}, we know that equation \eqref{LVF} can be contracted to the $2D$ integrable Lotka-Volterra system
\begin{equation} \label{lv2}
\begin{split}
\dot{B} &= B(b+D) \\
\dot{D} &= D(d-B),
\end{split}
\end{equation}
when $l=0$, or, when $l\neq0$, to the
$3D$ integrable Lotka-Volterra system (cf. equation no. 4 in Table 1 of \cite{GMRSW})
\begin{equation} \label{lv3}
\begin{split}
\dot{B} &= B(b+C+D) \\
\dot{C} &= C(c-B+D) \\
\dot{D} &= D(d-B-C),
\end{split}
\end{equation}
where $B=P_{1,k}$, $C=P_{k+1,k+l}$, $D = P_{k+l+1,n}$. We express the solutions for $B(t), C(t), D(t)$ in terms of the Lambert W function ($W$) in the following propositions, whose proofs are given in 
Appendix C. We note that the use of the Lambert W function to solve Lotka-Volterra and other biological models is not new, cf. \cite{CGHJK,Leh,Shi} and references therein. 
\begin{proposition} \label{p1}
Let
\[
F(p)=\int^p \frac{1}{bq\left(W(R(q))+1\right)}dq,\quad \text{with } R(q)=\frac{q^{d/b}}{b\e^{(q+b+c_1)/b}}.
\]
The solution to Lotka-Volterra system \eqref{lv2} is, in terms of the inverse function $F^{-1}$, given by $B(t)=F^{-1}(t+c_2)$ and $D(t)=W(R(B(t)))$.
\end{proposition}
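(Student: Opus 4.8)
The plan is to reduce \eqref{lv2} to a single quadrature by using its conserved quantity, and then to carry out that quadrature explicitly with the Lambert $W$ function. First, observe that \eqref{lv2} is the $n=2$ case of Proposition~\ref{HP} with $r_1=b$, $r_2=d$ (see Section~\ref{SecH}), so that $H=B+D+b\ln D-d\ln B$ is a first integral; this can also be seen directly, since dividing the two equations gives $\tfrac{dD}{dB}=\tfrac{D(d-B)}{B(b+D)}$, which separates as $(b/D+1)\,dD=(d/B-1)\,dB$ and integrates to $b\ln D+D-d\ln B+B=\text{const}$. Fixing the value of $H$ introduces the first integration constant (this is $c_1$); exponentiating, the orbit is cut out by $D^{\,b}e^{D}=\text{const}\cdot B^{\,d}e^{-B}$.

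Second --- the main step --- one solves this transcendental relation for $D$ as an explicit function of $B$. Taking $b$-th roots and setting $u=D/b$ turns it into $u\,e^{u}=R(B)$, where, after renaming the constant, $R$ is exactly $R(q)=q^{d/b}/(b\,e^{(q+b+c_1)/b})$ as in the statement; it is worth recording here the identity $R'(q)=R(q)(d-q)/(bq)$. Inverting $u\mapsto u e^{u}$ by the branch of $W$ appropriate to the sign of $b$ and the initial data (on the positive orthant, with $B,D>0$ and $R>0$, the principal branch when $b>0$) then gives $D=b\,W(R(B))$, so that $b+D=b\bigl(W(R(B))+1\bigr)$.

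Third, substitute this $D=D(B)$ into the first equation of \eqref{lv2}: $\dot B=B\,(b+D(B))=b\,B\,\bigl(W(R(B))+1\bigr)$, a scalar separable ODE. Separating variables gives $dt=dB/\bigl(b\,B\,(W(R(B))+1)\bigr)$, i.e.\ $t+c_2=F(B)$ with $F$ as defined in the statement. Since $W(R(q))+1$ keeps a fixed sign on the relevant domain, $F$ is strictly monotone, hence invertible, so $B(t)=F^{-1}(t+c_2)$; then $D(t)=b\,W(R(B(t)))$ by the second step, and the $\dot D$ equation holds automatically because $H$ is constant along the constructed curve (differentiating $H(B(t),D(t))=\text{const}$ and using $\dot B=B(b+D)$ forces $\dot D=D(d-B)$). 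As $c_1$ (value of $H$) and $c_2$ (time shift) range over a two-parameter family and \eqref{lv2} is polynomial, existence--uniqueness yields the general solution.

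The only delicate point is the second step: recasting $D^{b}e^{D}=\text{const}\cdot B^{d}e^{-B}$ as $u e^{u}=R(q)$ with all multiplicative and additive constants matched to the stated $R$, and choosing the correct branch of $W$ so that $D$ keeps the right sign along the orbit; the identity $R'=R(d-q)/(bq)$ is what makes everything downstream fit together, and in particular is what reconciles $D=b\,W(R(B))$ with the form of the integrand in $F$. The two quadratures, the monotonicity of $F$, and the uniqueness argument are then routine.
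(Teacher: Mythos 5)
Your proof is correct, and it takes a genuinely different (and more constructive) route than the paper's. The paper eliminates $D$ via $D=B'/B-b$, reduces \eqref{lv2} to the single second-order equation \eqref{de2}, and then \emph{verifies} by direct differentiation (using $W'(z)=W(z)/(z+zW(z))$ and the identity $R'(q)=R(q)(d-q)/(bq)$) that $B=F^{-1}(t+c_2)$ satisfies it; the function $R$ and the integrand of $F$ simply appear as given data. You instead \emph{derive} them: you produce the first integral $H=B+D+b\ln D-d\ln B$ (equivalently, the $n=2$ Hamiltonian of Proposition \ref{HP}), solve the level set $D^{b}e^{D}=\mathrm{const}\cdot B^{d}e^{-B}$ explicitly for $D$ with the Lambert $W$ function, and then reduce to a single separable quadrature, which is exactly $F$. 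Your version explains where $R$ comes from and makes the branch and monotonicity issues visible; the paper's is shorter but unmotivated. The two arguments meet at the same key relation $B'=bB\left(W(R(B))+1\right)$, which is the paper's equation \eqref{Bp}.

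One point worth flagging: your computation gives $D=b\,W(R(B))$, whereas the proposition as printed states $D=W(R(B))$. Your factor of $b$ is the correct one --- it is forced by $D=B'/B-b$ together with \eqref{Bp}, which is exactly the step the paper invokes when it says ``from which we obtain the stated expression for $D(t)$'' --- so the statement as written appears to drop a factor of $b$ (it is only consistent as printed when $b=1$). This is a defect of the statement rather than of your argument, but in a final write-up you should state the corrected formula explicitly rather than silently reproducing the printed one.
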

\begin{proposition} \label{p2}
Let
\[
G(p)=\int^p \frac{1}{bq\left(W(T(q))+1\right)}dq,\quad \text{with } T(q)=\frac{{\mathrm e}^{-\frac{q +b}{b}} \left(q^{\frac{d}{b}} c_{1} -q^{\frac{b +d}{b}} c_{2} \right)}{b^{2}}.
\]
The solution to Lotka-Volterra system \eqref{lv3} is, in terms of the inverse function $G^{-1}$, given by
\[
B(t)=G^{-1}(t+c_3),\quad
C=\frac{c_{2} \left(bB -B' \right)}{ \left(c_{1} -c_{2} B \right)},\quad
D=-\frac{c_{1} \left(bB -B' \right)}{B \left(c_{1} -c_{2} B \right)}.
\]
\end{proposition}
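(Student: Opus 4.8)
The plan is to exploit that \eqref{lv3} is the $n=3$, $(k,l,m)=(1,1,1)$ instance of \eqref{LVF}, i.e. the system \ref{O1}, which by section \ref{MSI} is Liouville maximally superintegrable: it carries two functionally independent integrals, the Hamiltonian $H:=H_{\text{\ref{O1}}}=B+C+D+b\ln C-(b+d)\ln B$ and the Casimir $Q_{1,3}=C/(BD)$; that $\dot H=0$ and $\frac{d}{dt}(C/(BD))=0$ along \eqref{lv3} is a one-line check using $c=b+d$ (or read off from Proposition \ref{HP} and Lemma \ref{PA}). Since an orbit is then confined to a curve, the strategy is to parametrise it by $B$, express $C$ and $D$ as functions of $B$ (this is where $W$ enters), and recover $t$ from a single quadrature, which produces the function $G$.

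First I would use the Casimir. Writing $\kappa$ for its constant value and substituting $C=\kappa BD$ into $H$ gives, along the orbit, the relation $(\kappa B+1)D+b\ln D=H-b\ln\kappa-B+d\ln B$, which is of the transcendental type $\alpha D+b\ln D=\beta$ with coefficient $\alpha=\kappa B+1$. Dividing by $b$, exponentiating, and setting $w=\alpha D/b$ turns it into $w\e^{w}=\tfrac{\alpha}{b}\e^{\beta/b}$, hence $w=W\big(\tfrac{\alpha}{b}\e^{\beta/b}\big)$ and
\[
D=\frac{b}{\kappa B+1}\,W\!\Big(\frac{\kappa B+1}{b\kappa}\,\e^{(H-B)/b}\,B^{d/b}\Big),\qquad C=\kappa BD.
\]

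Next I match the constants of the Proposition. Put $c_{1},c_{2}$ with $\kappa=-c_{2}/c_{1}$ (so $\kappa B+1=(c_{1}-c_{2}B)/c_{1}$) and fix the remaining free constant by $\e^{H/b}=-c_{2}/(b\e)$; the argument of $W$ then collapses to exactly $T(B)$, giving $D=\dfrac{bc_{1}}{c_{1}-c_{2}B}W(T(B))$ and $C=\dfrac{-bc_{2}B}{c_{1}-c_{2}B}W(T(B))$. From these two formulas $C+D=b\,W(T(B))$ by immediate cancellation, so the first equation of \eqref{lv3} reads $\dot B=B(b+C+D)=bB\big(W(T(B))+1\big)$; this both yields $bB-\dot B=-bB\,W(T(B))$, which turns the expressions for $C,D$ into those in the Proposition, and gives $dt=dB/\big(bB(W(T(B))+1)\big)$, i.e. $t+c_{3}=G(B)$ and $B(t)=G^{-1}(t+c_{3})$. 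As a self-contained check one verifies directly that the triple $(B,C,D)$ so defined solves \eqref{lv3}: with $w:=W(T(B))$, differentiating $w\e^{w}=T(B)$ and using $\dot B=bB(w+1)$ gives $\dot w=bBw\,(\ln T)'(B)$, and $(\ln T)'(B)=-\tfrac1b+\tfrac{d}{bB}-\tfrac{c_{2}}{c_{1}-c_{2}B}$ simplifies this to $\dot w=w(d-B)+C$; feeding this and $C+D=bw$ into $\dot C$ and $\dot D$ computed from the quotient formulas produces $\dot C=C(c-B+D)$ and $\dot D=D(d-B-C)$. (The same argument, using only the two integrals of the $2D$ system \eqref{lv2}, proves Proposition \ref{p1}.)

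I expect the main obstacle to be the constant-chasing in the third step: one must check that replacing the pair (value of $H$, value of the Casimir) by $(c_{1},c_{2})$ is a genuine reparametrisation of the two-parameter family of orbits modulo the time shift $c_{3}$, and in particular that forcing the $W$-argument to be exactly $T(q)$ is consistent — it pins the Hamiltonian value to $c_{2}$ via $\e^{H/b}=-c_{2}/(b\e)$ while leaving $\kappa=-c_{2}/c_{1}$ free, which is the right count. A secondary point worth a remark: $G$ is itself defined by an integral whose integrand contains a nested $W$, so ``solvable by quadratures'' here means reduction to this single quadrature, not a closed form; the Lambert $W$ enters precisely because, after eliminating $C$ via the Casimir, the Hamiltonian level set is the transcendental curve $\alpha D+b\ln D=\beta$ rather than an algebraic one.
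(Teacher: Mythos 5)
Your proof is correct, and it takes a genuinely different route from the paper's. The paper eliminates $C$ and $D$ via $C+D=B'/B-b$ and the sum of the last two equations, reducing \eqref{lv3} to a single third-order ODE in $B$ (equation \eqref{de3}), and then verifies --- by differentiating $B'=bB(W(T(B))+1)$ twice more and substituting, with the final check performed in Maple --- that the stated $B$, together with the resulting quotient expressions for $C$ and $D$, satisfies it. You instead derive the solution from the two conserved quantities of the $(k,l,m)=(1,1,1)$ system \ref{O1}: the Casimir $C/(BD)$ reduces the Hamiltonian level set to the transcendental relation $\alpha D+b\ln D=\beta$, which Lambert's $W$ inverts, and the cancellation $C+D=b\,W(T(B))$ then yields the first-order equation $\dot B=bB(W(T(B))+1)$ and hence $G$. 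Your constant-matching ($\kappa=-c_2/c_1$, $\e^{H/b}=-c_2/(b\e)$) is consistent and gives the correct parameter count, and your closing verification via $\dot w=w(d-B)+C$ is a clean, hand-checkable substitute for the paper's computer-assisted one; I have confirmed that $\dot C/C=c-B+D$ and $\dot D/D=d-B-C$ do follow from it. What your approach buys is an explanation of where $W$ and the constants $c_1,c_2$ come from, plus the converse statement that every orbit is of this form; what the paper's buys is a self-contained verification that does not lean on the integrability results of the earlier sections. One small slip in your aside: the $2D$ system \eqref{lv2} carries only one integral ($H_{\text{\ref{E1}}}$), not two, but one suffices there since its level sets are already curves.
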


We now introduce a linear change of variables which provides a separation of variables for equation \eqref{LVF}, i.e. in the transformed system each equation will depend on only one variable (as well as on the solution of \eqref{lv2} or \eqref{lv3}). Define
\[
y_i=\begin{cases}
P_{1,i}, & i=1,\ldots,k,\\
P_{k+1,i}, & i=k+1,\ldots, k+l,\\
P_{k+l+1,i}, & i=l+1,\ldots, k+l+m=n.
\end{cases}
\]
Each of these variables is a Darboux polynomial (and hence the transformed system is of Lotka-Volterra form). Their cofactors are obtained from \eqref{CP} and \cite[Thm 3.1]{KMQ}, cf. the footnote on page 6. We have $\dot{y}_i=y_iC_i$, with
\[
C_i=\begin{cases}
b+P_{i+1,n}, & i=1,\ldots,k,\\
c-P_{1,k}+P_{i+1,n}, & i=k+1,\ldots, k+l,\\
d-P_{1,k+l}+P_{i+1,n}, & i=l+1,\ldots, k+l+m=n,
\end{cases}
\]
which can be rewritten as
\[
C_i=\begin{cases}
b+B+C+D-y_i, & i=1,\ldots,k,\\
c-B+C+D-y_i, & i=k+1,\ldots, k+l,\\
d-B-C+D-y_i, & i=l+1,\ldots, k+l+m=n,
\end{cases}
\]
where $B,C,D$ are known functions of $t$, given in Proposition \ref{p1} or Proposition \ref{p2}. Thus, each equation has the form $
\dot{y} = y(f(t) - y)$,
which can be solved explicitly, for any function $f$, by using an integrating factor. With $\dot{F}(t) = f(t)$,
\begin{align*}
\frac{\dot{y}}{y^2} - \frac{f(t)}{y} = -1
&\implies e^{F(t)}\frac{\dot{y}}{y^2} - e^{F(t)}\frac{f(t)}{y} = -e^{F(t)} \\
&\implies
\frac{d}{dt} \left(-\frac{e^{F(t)}}{y} \right) = - e^{F(t)} \\
&\implies
-\frac{e^{F(t)}}{y(t)} = -\frac{e^{F(0)}}{y(0)} - \int e^{F(t)} dt,
\end{align*}
from which $y(t)$ can be found.

\section{Summary and concluding remarks}
We have introduced Darboux functions $P_{i,j}$ and $Q_{i,j}$ and provided, in Lemma \ref{PA}, the quadratic Poisson algebra they satisfy, with respect to the bracket
$
\{x_i,x_j\}=a_{ij}x_ix_j$, 
where
\begin{equation} \label{iam}
a_{ij}=\begin{cases}1 &i<j \\ 0 &i=j \\-1 &i>j\end{cases}
\end{equation}
which is the interaction matrix of Lotka-Volterra system \eqref{LVF} under study. The Poisson algebra was subsequently used to prove the involutivity of local integrals $I_{i,j}=P_{i,j}Q_{i,j}$ and $I_{k,l}$, when $i\equiv j\equiv k\equiv l \mod 2$ and $[k,l]\subset [i,j]$. The Poisson algebra may become interesting in its own right. One could study the relations needed for the algebra to satisfy the Jacobi identity.

Our Lotka-Volterra system \eqref{LVF} naturally splits up into 11 families, depending on whether the parameters $k,l,m$ are even or odd, and whether $b,c=b+d,d$ vanish or not. In section \ref{SecH}, for the cases E1-E5 and O1-O3, we provided their Hamiltonian functions, in terms of the scalings functions we introduced, cf. Lemma \ref{SC}. The cases N1, M1 and M2 do not seem to be Hamiltonian.

We have provided, for each class of systems E1-E5, O1-O3 and N1, sets of sufficiently many integrals for the various notions of integrability. All Hamiltonian systems (the E and O cases) are Liouville integrable. Furthermore, the systems E1 with $n=2$, O1 with $n=3$ and O2 with $k=1$ are maximally superintegrable ($n-1$ functionally independent integrals). The other E and O systems have either $n-2$ or $n-3$ functionally independent integrals. The non-Hamiltonian N1 systems admit $n-2$ integrals, and, as they are also measure preserving, cf. Proposition \ref{MP}, this implies they are nonholonomically integrable. The main tool which we developed and employed to establish the existence of a sufficient number of (commuting) integrals in each case, is the contraction theorem, Theorem \ref{ContThm}. It states that (commuting) integrals of a system with smaller blocks can be lifted to (commuting) integrals, which then also commute with local integrals present.  

The classes for which we were not able to establish integrability (in the sense of sufficiently many integrals) are M1 and M2. For these systems we (only) found $n-3$ functionally independent integrals. As an example, the 5-component system M1 with $(k,l,m)=(1,2,2)$ reads
\begin{equation}\label{tsja}
\begin{split}
\dot x_{1}&=x_{1} \left(b +x_{2}+x_{3}+x_{4}+x_{5}\right),\\ 
\dot x_{2}&=x_{2} \left(b +d -x_{1}+x_{3}+x_{4}+x_{5}\right),\\
\dot x_{3}&=x_{3} \left(b +d -x_{1}-x_{2}+x_{4}+x_{5}\right),\\
\dot x_{4}&=x_{4} \left(d -x_{1}-x_{2}-x_{3}+x_{5}\right),\\
\dot x_{5}&=x_{5} \left(d -x_{1}-x_{2}-x_{3}-x_{4}\right).
\end{split}
\end{equation}
The system \eqref{tsja} admits two functionally independent integrals
\[
\frac{{\mathrm e}^{x_{1}+x_{2}+x_{3}+x_{4}+x_{5}} \left(x_{2}+x_{3}\right)^{b}}{x_{1}^{b +d}},\quad \frac{x_{1} \left(x_{4}+x_{5}\right)}{x_{2}+x_{3}}.
\]
We cannot exclude the possibility that
a third integral for \eqref{tsja} might perhaps be constructed by a method different than the one considered in the present paper.

In Propositions \ref{p1}, \ref{p2} we have provided explicit solutions, in terms of the Lambert W function, for the 2- and 3-component Lotka-Volterra systems (\ref{lv2}), (\ref{lv3}). We subsequently showed how to integrate each $n$-component Lotka-Volterra equation in the class considered, by separation of variables and using an integrating factor.

We believe, but have not proven, that systems of the form \eqref{LVF}, but with different $r\in\R^n$ (e.g. containing more than $3$ blocks) are not integrable.

We have studied Lotka-Volterra systems with a very specific interaction matrix, given by \eqref{iam}. We have formulated the contraction theorem in this setting. However, as shown in Appendix B, contraction can be applied in more general LV-systems. The full scope of contraction remains to be explored. 

\newpage
\appendix

\section{Proof of Lemma \ref{PA}}
We establish the equalities (\ref{PP}), (\ref{PQ}) and (\ref{QQ}).
\begin{enumerate}
\item[(\ref{PP})] We have $m=\max(i,k)=k$ and $p=\min(j,l)$. We distinguish three cases. The RHS can be identified as
\[
P_{i,j}P_{k,l}-P_{m,p}^2-2P_{m,p}P_{l+1,j}=
\begin{cases}
P_{i,j}P_{k,l} & j<k\\
P_{i,j}P_{j+1,l} + P_{i,k-1}P_{k,j} & k\leq j \leq l \\
P_{k,l}(P_{i,k-1}-P_{l+1,j}) & j>l.
\end{cases}
\]
For the LHS, we have $\{P_{i,j},P_{k,l}\}=(\nabla P_{i,j})^T\Omega\nabla P_{k,l}$ with
\[
(\nabla P_{i,j})_a=\begin{cases}0 & a<i \\ 1 & i\leq a \leq j \\ 0 & a > j,\end{cases} \qquad
\Omega_{a,b}=\begin{cases}-x_ax_b & a>b \\ 0 & a=b \\ x_ax_b & a<b. \end{cases}
\]
Hence
\begin{equation}\label{PO}
\sum_a(\nabla P_{i,j})_a\Omega_{a,b}=x_b(P_{i,\min(b-1,j)}-P_{\max(b+1,i),j})
\end{equation}
and
\begin{equation} \label{pp}
\{ P_{i,j},P_{k,l} \}=\sum_{b=k}^l x_b(P_{i,\min(b-1,j)}-P_{b+1,j}).
\end{equation}
When $j<k$ we get $\sum_{b=k}^l x_b (P_{i,j}-0) = P_{i,j}P_{k,l}$. When $k\leq j \leq l$ equation (\ref{pp}) becomes
\begin{align*}
\sum_{b=k}^j x_b(P_{i,b-1}-P_{b+1,j}) + \sum_{b=j+1}^l x_b(P_{i,j}-0)
&=\sum_{b=k}^j x_b(P_{i,k-1}+P_{k,b-1}-P_{b+1,j})+P_{i,j}P_{j+1,l}\\
&=P_{i,k-1}P_{k,j}+P_{i,j}P_{j+1,l},
\end{align*}
as $\sum_{b=k}^j x_b(P_{k,b-1}-P_{b+1,j})=\{P_{k,l},P_{k,l}\}=0$. Similarly, when $j>l$ equation (\ref{pp}) becomes
\[
\sum_{b=k}^l x_b(P_{i,b-1}-P_{b+1,j})=P_{k,l}(P_{i,k-1}-P_{l+1,j})+\{P_{k,l},P_{k,l}\}.
\]
\hfill $\square$
\item[(\ref{PQ})] Using (\ref{PO}) and
\begin{equation} \label{nQb}
(\nabla Q_{k,l})_b=\begin{cases}0 & b<k \\ (-1)^{b+k+1}\dfrac{Q_{k,l}}{x_b} & k\leq b \leq l \\ 0 & b > l, \end{cases}
\end{equation}
we get, by changing the order of summation,
\begin{align*}
\{ P_{i,j},Q_{k,l} \}&=\left(\sum_{b=k}^l (P_{i,\min(b-1,j)}-P_{\max(b+i,i),j})(-1)^{b+k+1}\right)Q_{k,l}\\
&=\left(\sum_{a=i}^j\left(\sum_{b=k}^{\min(a-1,l)}(-1)^{k+b}+\sum_{b=\max(k,a+1)}^l\right) x_a\right)Q_{k,l}\\
&=\left(\sum_{a=i}^j c^{k,l}_a x_a \right) Q_{k,l}.
\end{align*}
\hfill $\square$
\item[(\ref{QQ})] Taking the sum $\sum_{a=1}^n$ of the product of $(\nabla Q_{i,j})_a$ and $\sum_b \Omega_{a,b}(\nabla Q_{k,l})_b = \{ x_a,Q_{k,l} \}=c_a^{k,l} x_a Q_{k,l}$ yields the result.
\hfill $\square$
\end{enumerate}

\section{More general contraction}
Consider the $n=6$ dimensional LV-system of the form \eqref{LVgen}, with
\[
\r=\begin{pmatrix} u\\v\\v\\v\\v\\v\end{pmatrix},\quad 
\A=\begin{pmatrix}
0 & c  & c  & c  & c  & c  
\\
 -c  & 0 & d  & d  & e  & -e  
\\
 -c  & -d  & 0 & d  & e  & -e  
\\
 -c  & -d  & -d  & 0 & e  & -e  
\\
 -c  & -e  & -e  & -e  & 0 & -e  
\\
 -c  & e  & e  & e  & e  & 0 
\end{pmatrix}.
\]
By introducing variables $\b{y}=(x_1,x_2+x_3+x_4,x_5,x_6)$ the system contracts to $\dot{y}_i = y_i(\tilde{r}_i +  \sum_{j=1}^{4} \tilde{a}_{ij} x_j)$, with
\[
\tilde{\r}=\begin{pmatrix} u\\v\\v\\v\end{pmatrix},\quad  
\tilde{\A}=\begin{pmatrix}
0 & c  & c  & c  
\\
 -c  & 0 & e  & -e  
\\
 -c  & -e  & 0 & -e  
\\
 -c  & e  & e  & 0 
\end{pmatrix}.
\]
This system has Hamiltonian \[
\tilde{H}=-\frac{v}{c}\ln\left(y_1\right)
+\frac{u}{c}\ln\left(\frac{y_3y_4}{y_2}\right)
+\sum_{i=1}^4 y_i
\]
and admits the additional integral \[
\tilde{K}=\frac{y_{2} \left(y_{2}+y_{3}+y_{4}\right)}{y_{3} y_{4}}.
\]
The lifted integrals
\[
H=-\frac{v}{c}\ln\left(x_1\right)
+\frac{u}{c}\ln\left(\frac{x_5x_6}{x_2+x_3+x_4}\right)
+\sum_{i=1}^6 x_i,\quad K=\frac{\left(x_2+x_3+x_4\right) \left(x_2+x_3+x_4+x_{5}+x_{6}\right)}{x_{5} x_{6}}
\]
Poisson commute with respect to the bracket induced by $\A$, and they also Poisson commute with the local integral
\[
L=\frac{\left(x_{2}+x_{3}+x_{4}\right) x_{3}}{x_{2} x_{4}}.
\]
Thus, both systems are Liouville integrable. (The Hamiltonian of the $x$-system is $H+\frac{u}{c}\ln\left(L\right)$.) 
\section{Solutions of 2- and 3-dimensional Lotka-Volterra systems}
We show that the expressions provided in Propositions \ref{p1} and \ref{p2} are solutions to Lotka-Volterra systems \eqref{lv2} and \eqref{lv3}.

\begin{proof}[Proof of Proposition \ref{p1}] 
Solving the first equation of \eqref{lv2} for $D$ yields $D=B'/B-b$. Substituting this into the second equation gives the second order differential equation
\begin{equation} \label{de2}
BB''-(B')^2+B(B-d)(B'-bB)=0.
\end{equation}
Differentiating the expression $F(B(t))=t+c_2$, using the fundamental theorem of calculus, and solving for $B'(t)$ yields
\begin{equation} \label{Bp}
B'=bB\left(W(R(B))+1\right),
\end{equation}
from which we obtain the stated expression for $D(t)$. We differentiate \eqref{Bp}, using the facts that $W'(z)=W(z)(z+zW(z))^{-1}$ and $R'(B)=-(B-d)B'/(bB)$, to get
\[
B''=bB'(W(R(B))+1)-(B-d)B'\frac{W(R(B))}{W(R(B))+1}
\]
which together with \eqref{Bp} implies that our expression for $B$ satisfies \eqref{de2} (and hence \eqref{lv2}).
\end{proof}

\begin{proof}[Proof of Proposition \ref{p2}]
Solving the first equation of \eqref{lv3} for $C+D$ yields $C+D=B'/B-b$. Substituting this into the sum of the other equations gives us an equation we can solve for $C(t)$,
\begin{equation} \label{C}
C=\frac{BB''-(B')^2}{bB^2}-\frac{(B-d)(bB-B')}{bB}.
\end{equation}
and hence for $D(t)$ we find
\begin{equation} \label{D}
D=-\frac{BB''-(B')^2}{bB^2}+\frac{(B-b-d)(bB-B')}{bB}.
\end{equation}
Using these expressions, the system \eqref{lv3} reduces to a single third order nonlinear ODE, which is rather complicated,
\begin{equation} \label{de3}
\begin{split}
&bB\left(B^{2} B''' -3 B B' B'' +2 (B')^{3}\right)+b B^{3} \left(b B - B' \right) \left((B-d)^{2}-B' \right)
\\
&\hspace{2cm}+\left(B'' B -B'^{2}-\left(B -d \right) \left(B b -B' \right) B \right)
\left(\left(B^{2}-d B -B' \right) \left(B b +B' \right)+B'' B\right)=0. 
\end{split}\end{equation}
By differentiating the expression $G(B(t))=t+c_3$ we find
\begin{equation} \label{Bp2}
B'=bB\left(W(T(B))+1\right).
\end{equation}
And, by differentiation of \eqref{Bp2}, and using the same equation to eliminate the explicit dependence on $W(T(B))$, we get
\begin{equation} \label{Bpp}
B''=\frac{(B')^2}{B}+\left(B-d+\frac{bc_2B}{c_1-c_2B}\right)(bB-B').
\end{equation}
By substituting the expression \eqref{Bpp} into \eqref{C} and \eqref{D} one obtains the required expressions for $C$ and $D$. We have, using Maple \cite{Map}, also differentiated \eqref{Bpp} to find an expression for $B'''$. By substitution of this expression, together with \eqref{Bpp}, we find that \eqref{de3} is identically satisfied.
\end{proof}

\end{document}